\pgfplotsset{compat=1.17}
\theoremstyle{definition}
\newtheorem{defn}{Definition}[section]
\newtheorem{theorem}{Theorem}[section]
\newtheorem{proposition}{Proposition}[section]
\numberwithin{equation}{section}
\definecolor{airforceblue}{rgb}{0.36, 0.54, 0.66}
\renewcommand*\rm[1]{\mathrm{#1}}
\newcommand\HH{\mathrm{H}}
\newcommand*\I{\mathrm{i}}
\newcommand*\cM{\mathcal{M}}
\newcommand*\cN{\mathcal{N}}
\newcommand*\cP{\mathcal{P}}
\newcommand*\cQ{\mathcal{Q}}
\newcommand*\cR{\mathcal{R}}
\newcommand*\cT{\mathcal{T}}
\newcommand*\cZ{\mathcal{Z}}
\newcommand*\bbC{\mathbb{C}}
\newcommand*\bbF{\mathbb{F}}
\newcommand*\bbN{\mathbb{N}}
\newcommand*\bbZ{\mathbb{Z}}
\newcommand{\ab}[1]{#1^{\mathrm{ab}}}
\newcommand{\GL}[1]{\mathrm{GL}(#1)}
\newcommand{\PGL}[1]{\mathrm{PGL}(#1)}
\newcommand*\U{\mathrm{U}}
\newcommand*\SU{\mathrm{SU}}
\newcommand*\Alt{\mathrm{Alt}}
\newcommand*\Sym{\mathrm{Sym}}
\newcommand*\bbCt{\mathbb{C}^\times}
\newcommand*\QFT{\mathbf{QFT}}
\title{\vspace{1cm}\bf{McKay quivers and decomposition}\vspace{0.2cm}}
\date{\vspace{0.2cm} \today}
\author{Shani Meynet$^\circ$ and Robert Moscrop$^\bullet$\\
Department of Mathematics, Uppsala University\\\
Box 480, SE-75106 Uppsala, Sweden}
\begin{document}

\begin{titlepage}
\maketitle
\thispagestyle{empty}
\vspace{0.2cm}
\begin{abstract}
	\noindent When a quantum field theory in $d$-spacetime dimensions possesses a global $(d-1)$-form symmetry, it can decompose into disjoint unions of other theories. This is reflected in the physical quantities of the theory and can be used to study properties of the constituent theories. In this note we highlight the equivalence between the decomposition of orbifold $\sigma$-models and disconnected McKay quivers. 
	Specifically, we show in numerous examples that each component of a McKay quiver can be given definitive geometric meaning through the decomposition formulae. In addition, we give a purely group and representation theoretic derivation of the quivers for the cases where the trivially acting part of the orbifold group is central. As expected, the resulting quivers are compatible with the case of $\sigma$-models on `banded' gerbes. 
\end{abstract}
\vfill

\begin{flushright}
\noindent $\underline{\quad\quad\quad\quad\quad\quad\quad\quad\quad\quad\quad\;\;}$\\
$^\circ$\footnotesize{{\tt \href{mailto:shani.meynet@math.uu.se}{shani.meynet@math.uu.se}}}\\
$^\bullet$\footnotesize{{\tt \href{mailto:robert.moscrop@math.uu.se}{robert.moscrop@math.uu.se}}}
\end{flushright}
\end{titlepage}


\setcounter{page}{1}
{\hypersetup{linkcolor=black}
\tableofcontents
}

\section{Introduction}

Decomposition is a phenomenon observed in $d$-dimensional quantum field theories (QFTs) with a $(d-1)$-form symmetry. In these cases, the QFT can be seen as equivalent to the disjoint union of other theories often called `universes' \cite{Hellerman:2006zs,Pantev:2005rh, Robbins:2020msp,Pantev:2022pbf,Sharpe:2021srf,Robbins:2021ibx,Sharpe:2014tca, Komargodski:2020mxz}. This striking property was first observed in the context of string compactification on Calabi-Yau stacks\footnote{See \cite{Sharpe:2001bs, Sharpe:2001zp,Pantev:2005wj} for early work on stacks in string theory.} and gerbes \cite{Hellerman:2006zs}. Since then, such behaviour has been observed in  many different contexts \cite{Sharpe:2022ene}:
\begin{itemize}
    \item Two-dimensional $\U(1)$ gauge theory with-non minimal charges is equivalent to the union of independent $\U(1)$ theories with minimal charges \cite{Hellerman:2006zs}.
    
    \item A two-dimensional gauge theory $G$ with matter invariant under the center $Z(G)$ of $G$ is equivalent to a union of $G/Z(G)$ gauge theories with discrete theta angles \cite{Sharpe:2014tca}.
    
    \item If one restricts the instantons of four-dimensional Yang-Mills theory to have instanton number divisible by $k>1$ then the theory decomposes into $k$-many copies of regular four-dimensional Yang-Mills with differing theta angles  \cite{Tanizaki:2019rbk}.
\end{itemize}

Evidence that physics does indeed obey these decompositions can be seen through the physical quantities of the theories. For example, the torus partition function of 2d theories can be recast as the sum of partition functions of other theories as predicted by the decomposition formulae \cite{Hellerman:2006zs}. From this perspective, a possible decomposition is signalled by the behaviour of the partition function and the fact that it can be reorganised to correctly reproduce the partition functions of other theories. However, it may be difficult to see this splitting initially. In this paper we will focus on orbifold $\sigma$-models and show how the McKay quiver \cite{McKay1} of the orbifold group can be used as a simple indicator for decomposition.

McKay quivers have long been used in physics \cite{Lawrence:1998ja,Hanany:1998sd,Feng:2000af,Feng:2000mw,Aspinwall:2000xs} to understand string compactifications through the McKay correspondence \cite{itoreid,MR740077,bridge2001}. Despite this extensive study, little progress has been made into understanding the properties of disconnected McKay quivers from both a physical and algebraic point of view. Recently, a paper 
\cite{browne_2021} has shed some light onto the properties of disconnected McKay quivers and exhibited that there exist examples of disconnected components which do not arise as regular McKay quivers of finite groups. While these do not correspond to standard McKay quivers, we will see that these components can be understood as projective quivers \cite{Aspinwall:2000xs,Feng:2000af,Feng:2000mw} which arise in the study of orbifolds with discrete torsion \cite{Vafa:1986wx,Vafa:1994rv,Douglas:1996sw,Douglas:1998xa,Sharpe:1999pv,Sharpe:1999xw,Sharpe:2000ki}. In motivating this equivalence between QFT decomposition and McKay quiver decomposition, we thereby claim that disconnected McKay quivers can be understood precisely by the formulae of \cite{Hellerman:2006zs,Robbins:2020msp}, thus giving each component a definitive geometric meaning.

Physically, one can expect the decomposition of McKay quivers to correspond to the decomposition of $\sigma$-models with $\cN=2$ supersymmetry by remarking that the boundary conditions of these theories describe supersymmetric cycles in the target space geometry \cite{Govindarajan:2001kr,Lindstrom:2002jb,Lindstrom:2002mc,Hori:2000ck}. That is, they describe branes satisfying a BPS bound. In particular, using the standard brane-sheaf dictionary \cite{Aspinwall:2004jr, Aspinwall:2009isa}, one can identify this with the derived (bounded) category of coherent sheaves $D^b(\rm{Coh}(\cM))$ of the target orbifold $\cM$. The statement of decomposition is then equivalent to a statement about coproduct decompositions of derived categories of coherent sheaves on quotient stacks. Since McKay quivers are expected to capture the category of branes \cite{Douglas:1996sw,Hanany:1998sd, Closset:2019juk}, one expects that the McKay quiver reflects this decomposition appropriately.

In order to provide mathematical evidence for this equivalence, we will give a description of the components of the disconnected quivers defined by central extensions by using a purely group and representation theoretic argument\footnote{We will see that the natural language for understanding these quivers is given by group extensions and group cohomology. We recommend the following books for those unfamiliar with these topics \cite{karpilovsky1985projective, gruenberg1970cohomological, brown1982cohomology,beyl1982group, rotman2008introduction}.}. This generalises the work of \cite{Feng:2000af,Feng:2000mw}, where Schur extensions were used to easily calculate projective quivers to include more general central extensions. We will see that these quivers will be inline with the description of `banded' gerbes from \cite{Hellerman:2006zs}.

The paper is organised as follows. Section 2 will review some of the major results of the decomposition of orbifold theories. The first part of section 3 will serve as a brief review of linear and projective McKay quivers and their properties. The remainder of section 3 explains the connection between group extensions and disconnected McKay quivers. Finally, in section 4 we give several examples of quiver decomposition which showcase the equivalence of orbifold QFT decomposition and the decomposition of McKay quivers.

\section{Orbifold decomposition}
\subsection{The decomposition formulae}
Our main theories of interest are two dimensional nonlinear $\sigma$-models with target space $[X/G]$ for some manifold $X$ and discrete group $G\leq\rm{Isom}(X)$. We will denote such a theory by $\QFT([X/G])$. Implicit in this definition is the action of $G$ which, in principle, could have a trivially acting normal subgroup $N\trianglelefteq G$. In this case, $[X/G]$ should not really be thought of as a standard orbifold but instead a generalisation called a quotient stack \cite{Sharpe:2001bs}. Na\" ively, one may think that this should be equivalent to working on an orbifold $[X/(G/N)]$, but it turns out that physics detects the difference between using an orbifold $[X/(G/N)]$ and a stack $[X/G]$.

Intrinsically, the difference is captured by the symmetries of the two theories. By working on the quotient stack $[X/G]$ it is explicit that there is a one-form symmetry $\rm{B}Z(N)$, while the effective orbifold theory lacks such a symmetry. The presence of this one-form symmetry then signals that the theory can decompose into a theory on multiple disjoint spaces. This decomposition is captured by the formulae of \cite{Hellerman:2006zs,Robbins:2020msp}. 

Consider a theory $\QFT([X/G])$ with a trivially acting normal subgroup $N\trianglelefteq G$. This gives a short exact sequence
\begin{gather}\label{eq:exact}
    1\rightarrow N \rightarrow G \rightarrow G/N \rightarrow 1.
\end{gather}
Let $\hat{N}$ denote the set of isomorphism classes of irreducible representations (irreps) of $N$. Then there is a natural action of $G/N$ on $\hat{N}$ given by taking a lift $l\in G$ of $k\in G/N$ and defining $k\cdot \rho$ to be the representation $g\mapsto \rho(l g l^{-1})$ where $\rho\in \hat{N}$. The decomposition formula of \cite{Hellerman:2006zs} then states that
\begin{gather}\label{eq:decomp-orb}
    \QFT([X/G]) = \QFT\left(\ \bigsqcup_{\hat{\omega}} \left[\frac{X\times \hat{N}}{(G/N)}\right]_{\hat{\omega}}\right),
\end{gather}
where $\hat{\omega}$ encodes the discrete torsion of each theory (described explicitly in \cite{Hellerman:2006zs}). This simplifies tremendously if $N$ is a central subgroup of $G$. In this case the stabilizers of the action of $G/N$ are isomorphic and we get
\begin{gather}\label{eq:cent}
    \QFT([X/G]) = \QFT\left(\ \bigsqcup_{\rho\in \hat{N}}  \left[\frac{X}{(G/N)}\right]_{\hat{\omega}(\rho)}\right).
\end{gather}
Now the discrete torsion phase is determined as the image of the extension class $\omega\in\HH^2(G/N,N)$ defined by the central extension in \cref{eq:exact} under each $\rho\in\hat{N}$.

It is worth noting that this whole story can be generalised to the case where $[X/G]$ admits both a trivially acting subgroup and discrete torsion \cite{Robbins:2020msp}. As such, these formulae give a method of determining the disjoint orbifolds that occur whenever one works with a quotient stack $[X/G]$.

\subsection{A classic example}\label{sec:d4}
Should decomposition occur then the physical quantities derived from the theory must reflect the decomposition in question. In order to illustrate this, we recall a classic example considered in \cite{Pantev:2005rh,Hellerman:2006zs} to show how decomposition is reflected in the torus partition function of the theory.

Consider the space $[X/D_4]$ where the $\bbZ_2$ center of $D_4$ acts trivially. Following \cite{Hellerman:2006zs}, we denote the elements of $D_4$ by
\begin{gather}
D_4 = \{ 1, z, a, b, az, bz, ab, ba \}
\end{gather}
where $a^2 = b^4 = 1$, $b^2 = z$ and $ba=abz$. In this presentation the center of $D_4$ is simply $\bbZ_2\cong \{1, z\}$. According to \cref{eq:cent}, this theory decomposes into two pieces: one with discrete torsion and one without. Explicitly, we have
\begin{gather}
    \QFT([X/D_4]) = \QFT\big( [X/(\bbZ_2 \times
\bbZ_2)] \sqcup [X/(\bbZ_2 \times \bbZ_2)]_{\rm{d.t.}} \big),
\end{gather}
where $[X/(\bbZ_2 \times \bbZ_2)]_{\rm{d.t.}}$ denotes the $\bbZ_2\times\bbZ_2$ orbifold theory with discrete torsion. To see this physically, we can compute the one-loop torus partition function of the orbifold as
\begin{gather}
\cZ (D_4) = \frac{1}{|D_4|} \sum_{gh=hg} Z_{g,h}.
\end{gather}
As the center acts trivially, this reduces to
\begin{gather}\label{eq:part}
\cZ(D_4)  \: = \: \frac{1}{2}\left( \sum_{g \in \{1, a, b, ab\}} Z_{g,1} + \sum_{g \in \{1, a\}} Z_{g, a} + \sum_{g \in \{1, b\}} Z_{g, b} + \sum_{g \in \{1, ab \}} Z_{g, ab} \right).
\end{gather}

Now we need to check if this one-loop partition function is the same as the sum of the one-loop partition functions of the $\bbZ_2 \times \bbZ_2$ orbifold theory with and without discrete torsion. We present $\bbZ_2\times\bbZ_2$ as
\begin{gather}
    \bbZ_2\times\bbZ_2 = \langle a,b : a^2=b^2=1,\, bab=a\rangle.
\end{gather}
The possible discrete torsion of $\bbZ_2\times\bbZ_2$ is classified by $\HH^2(\bbZ_2\times\bbZ_2,\bbCt)=\bbZ_2$. In the sector with non-trivial discrete torsion, the torsional phases are given by \cite{Vafa:1994rv}
\begin{gather}
    \epsilon(a^m b^n,a^{m'}b^{n'})  = (-1)^{mn'-nm'}.
\end{gather}
Turning on discrete torsion results in multiplying each contribution of the torus partition function by the appropriate phase.

First we calculate the one-loop partition function with no discrete torsion. Explicitly, we have
\begin{gather}
\cZ(\bbZ_2 \times
\bbZ_2) =  \frac{1}{|\bbZ_2 \times
\bbZ_2|} \sum_{ g \in \{1, a, b, ab\}} \big(Z_{g,1} + Z_{g,a} + Z_{g,b} + Z_{g, ab}\big). 
\end{gather}
In the sector with discrete torsion, not all terms have positive coefficient. In particular, we get
\begin{gather}
 \cZ(\bbZ_2 \times
\bbZ_2)_{\epsilon}  =  \frac{1}{|\bbZ_2 \times
\bbZ_2|} \sum_{ g,h \in \bbZ_2 \times
\bbZ_2} \epsilon(g,h) Z_{g,h} = P_\epsilon - N_\epsilon,
\end{gather}
where the positive contribution is given by
\begin{gather}
    P_\epsilon = \frac{1}{4}\left( \sum_{ g \in \{1, a, b, ab\}} Z_{g,1} + \sum_{ g \in \{1, a\}} Z_{g,a} + \sum_{ g \in \{1, b\}} Z_{g,b} + \sum_{ g \in \{1, ab\}} Z_{g, ab}\right)
\end{gather}
and the negative contribution is
\begin{gather}
    N_\epsilon = \frac{1}{4}\left(\sum_{ g \in \{b, ab\}}  Z_{g,a} + \sum_{ g \in \{a, ab\}} Z_{g,b} + \sum_{ g \in \{a, b\}} Z_{g, ab} \right).
\end{gather}
If we add the one-loop partition functions of the $\bbZ_2 \times
\bbZ_2$ orbifold with and without discrete torsion, we recover the one-loop partition function of the $D_4$ orbifold with trivially acting center found in \cref{eq:part}

Another method of probing decomposition besides the partition function analysis is given by inspecting the cohomology of the theory. Consider $[X/D_4]$ with the same trivially acting $\bbZ_2$ center. In particular we take $X = T^6$, which leads to the Hodge diamond
\begin{displaymath}
\begin{array}{ccccccc}
 & & & 2 & & & \\
 & & 0 & & 0 & & \\
 & 0 & & 54 & & 0 & \\
2 & & 54 & & 54 & & 2 \\
 & 0 & & 54 & & 0 & \\
 & & 0 & & 0 & & \\
 & & & 2 & & & \end{array}
\end{displaymath}
At face value this appears to violate cluster decomposition due to the multiple dimension zero operators \cite{Hellerman:2006zs}, but we can look at this from  the perspective of the effective orbifold group $\bbZ_2\times\bbZ_2$ instead. The Hodge diamonds for the $\bbZ_2\times\bbZ_2$ orbifold of $T^6$ are given in \cite{Vafa:1994rv}. Explicitly, the Hodge diamond for the orbifold theory with no discrete torsion is given by
\begin{displaymath}
\begin{array}{ccccccc}
 & & & 1 & & & \\
 & & 0 & & 0 & & \\
 & 0 & & 51 & & 0 & \\
1 & & 3 & & 3 & & 1 \\
 & 0 & & 51 & & 0 & \\
 & & 0 & & 0 & & \\
 & & & 1 & & & \end{array}
\end{displaymath}
Similarly, the same calculation for the torsion sector gives
\begin{displaymath}
\begin{array}{ccccccc}
 & & & 1 & & & \\
 & & 0 & & 0 & & \\
 & 0 & & 3 & & 0 & \\
1 & & 51 & & 51 & & 1 \\
 & 0 & & 3 & & 0 & \\
 & & 0 & & 0 & & \\
 & & & 1 & & & \end{array}
\end{displaymath}
This cohomological data gives information about the massless spectrum of the theories and as such should respect any decomposition the theory respects. Indeed, adding the two $\bbZ_2\times\bbZ_2$ Hodge diamonds together gives us precisely the Hodge diamond for the $D_4$ theory with a trivially acting center. 

These simple examples provide compelling evidence that the physics of these theories do indeed obey the decomposition outlined by \cref{eq:decomp-orb}. The remarkable fact that the physics in some way knows about the trivially acting parts of the orbifold group is captured mathematically by the difference between quotient stacks and effective orbifolds. As remarked in \cite{ruan_2007}, while the orbit spaces of $[X/G]$ and $[X/G_{\rm{eff}}]$ are identical, the structure of the two spaces are intrinsically different as the trivially acting part of $G$ will appear in any chart of $[X/G]$ and not at all in $[X/G_{\rm{eff}}]$.

In this paper we will primarily focus on orbifolds of $\mathbb{C}^{2}$ and $\mathbb{C}^{3}$ in the special case in which the the orbifold group is a finite subgroup of $\SU(2)$ and $\SU(3)$ respectively. These are orbifolds of great interest in theoretical physics since they describe singular, non-compact, Calabi-Yau manifolds. These orbifolds describe the world volume theory of probe branes, which turn out to give superconformal field theories. This allows us to compare many of the quiver diagrams we obtain in \cref{sec:examples} with the preexisting literature on orbifold compactification \cite{Lawrence:1998ja, Hanany:1998sd,Feng:2000af, Feng:2000mw}.

\section{Ineffective orbifolds and McKay quivers}
Given that the decomposition formulae of \cite{Hellerman:2006zs,Robbins:2020msp,Sharpe:2022ene} depend on representations of a trivially acting subgroup, there is a natural question: do the McKay quivers of the ineffective orbifold group detect the decomposition of the QFT? In the language of representation theory, a trivially acting subgroup corresponds to choosing the action of the orbifold group to be unfaithful on $\bbC^n$. It is known that this causes the corresponding McKay quiver to be disconnected \cite{McKay1,ringel,browne_2021}, but we wish to be more explicit in the description of the disconnected components. 

Let us briefly review the basics of McKay quivers and their projective counterparts before giving a more detailed description of the disconnected quiver components.

\subsection{Linear McKay quivers}
Given a representation of a finite group, the McKay quiver summarises the tensor product decomposition of said representation when tensored against any irreducible representation of the group. This leads to the following definition \cite{McKay1}.
\begin{defn}
    Given a finite group $G$ with irreps  $\{\rho_i\}_{i=1}^n$, the McKay quiver relative to a (possibly reducible) representation $\cR$ is the quiver $\cQ(G,\cR)$ with adjacency matrix $A=(a_{ij})$ defined by
    \begin{gather}\label{eq:quiv}
        \cR \otimes \rho_i = \bigoplus_{j=1}^n  a_{ij}\rho_j.
    \end{gather}
\end{defn}

\noindent Using the orthogonality of characters, we can invert \cref{eq:quiv} to obtain the following expression for the adjacency matrix
\begin{gather}
    a_{ij} = \frac{1}{|G|}\sum_{g\in G}\chi_\cR (g) \chi_{i}(g) \overline{\chi_j(g)},
\end{gather}
where $\chi_i(g)=\rm{Tr}\,\rho_i (g)$.

Despite being constructed in a purely representation theoretic manner, McKay quivers have a rich interpretation in terms of geometry and K-theory \cite{MR740077,itoreid,ROAN1996489,bridge2001}. The McKay correspondence, and its generalisations, tell us that the McKay quiver of a group $G$ encodes the geometry of an orbifold $[X/G]$. This information was leveraged to understand the matter content of theories descending from string theory compactified on $[X/G]$ \cite{Lawrence:1998ja,Douglas:1996sw,Hanany:1998sd}. More recently, for finite subgroups of $\SU(3)$, these quivers have been interpreted as 5d BPS quivers for theories coming from M-theory compactified on $[\bbC^3/G]$ \cite{Closset:2019juk,DelZotto:2022fnw}.

To end this section, let us recall the following useful property of McKay quivers. Given an irreducible representation of any dimension we can tensor multiply it with a one dimensional irrep to obtain another irrep of the same dimension. Since the irreps of $G$ label the vertices of the quiver, we find that this generates an automorphism of the quiver. This group, isomorphic to $\ab{G}=G/G^{(1)}$ where $G^{(1)}=[G,G]$, is sometimes called the quantum symmetry of the corresponding orbifold theory.
\subsection{Projective McKay quivers}
McKay quivers describe the matter content of orbifold theories with no discrete torsion. However, discrete torsion can be incorporated by additionally looking at projective representations of the orbifold group \cite{Douglas:1998xa,Aspinwall:2000xs}. 

A projective representation of a finite group $G$ is a homomorphism $P:G\rightarrow \PGL{V}$. Any such $P$ must then satisfy
\begin{gather}
    P(x)P(y) = \alpha(x,y)P(xy),
\end{gather}
for some function $\alpha:G\times G\rightarrow \bbCt$. Furthermore, associativity and $P(1_G)=1_V$ constrain $\alpha$ to satisfy
\begin{gather}
    \alpha(x,y)\alpha(xy,z)=\alpha(x,yz)\alpha(y,z),\quad \alpha(x,1_G)=1=\alpha(1_G,x).
\end{gather}
These conditions simply state that $\alpha$ is a 2-cocycle. Multiplying any 2-cocycle by a 2-coboundary 
\begin{gather}
    (\delta\beta)(x,y)=\frac{\beta(x)\beta(y)}{\beta(xy)},
\end{gather}
leads to equivalent projective representations. From this we see that the possible projective representations are classified by the Schur multiplier $\HH^2(G,\bbCt)$. We call a projective representation labelled by a 2-cocycle $\alpha$ an $\alpha$-representation.

Just as linear representations can be thought of as modules over the group algebra $\bbC G$, projective representations with 2-cocycle $\alpha$ can be thought of as modules over the $\alpha$-twisted group algebra $\bbC^\alpha G$. We describe $\bbC^\alpha  G$ by specifying a basis given by $\{e_x: x\in G\}$ and supplementing it with the distributive product
\begin{gather}
    e_x\cdot e_y = \alpha(x,y)e_{xy}.
\end{gather}
This was used in \cite{Aspinwall:2000xs} in order to define a projective McKay quiver in terms of module data and projective characters. In particular, let $P_i$ be the simple $\bbC^\alpha G$ modules corresponding to irreducible $\alpha$-representations. Then given a general $\bbC^\alpha G$-module $\cP$, we can decompose it into
\begin{gather}
    \cP = \bigoplus_i V_i \otimes P_i,
\end{gather}
where $V_i$ are vector spaces. If $\cR$ is a linear represenation and $\cP$ an $\alpha$-representation, then it is clear that $\cR\otimes\cP$ is also an $\alpha$-representation. Applying the above decomposition to $\cR\otimes P_i$ then gives the analogue of \cref{eq:quiv}
\begin{gather}
    \cR\otimes P_i = \bigoplus_j V_{ij}\otimes P_j.
\end{gather}
The corresponding quiver adjacency matrix is then $a_{ij}=\dim V_{ij}$. This quiver, which we denote by $\cQ_\alpha(G,\cR)$, then encodes the same information as a regular McKay quiver but now with discrete torsion specified by $\alpha\in\HH^2(G,\bbCt)$.

A disadvantage of this approach is that the values of $\alpha(x,y)$ are needed explicitly in order to compute the quiver. It is therefore advantageous to lift the projective representations to linear representations of another group, as in \cite{Feng:2000af, Feng:2000mw}. Doing so, one can then use regular character theory in order to compute torsion quivers.
\subsection{Extensions and decompositions}\label{sec:extanddecomp}
Having covered both the torsional and non-torsional cases we are now ready to describe the decomposition of McKay quivers. To address the question of when a McKay quiver is disconnected, we quote the following theorem (see \cite{browne_2021} for a proof and discussion).

\begin{theorem}
    The number of connected components of a McKay quiver relative to $\rho:G\rightarrow \GL{V}$ is given by the number of $G$-conjugacy classes contained in $\ker\rho$.
\end{theorem}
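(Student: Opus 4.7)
The plan is to set $K = \ker \rho$ and establish a bijection between connected components of $\cQ(G,\cR)$ and $G$-orbits on $\widehat K$, then identify the latter with $G$-conjugacy classes in $K$.

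First, by Clifford's theorem, for each irrep $\sigma$ of $G$ the restriction $\sigma|_K$ is a multiple of $\bigoplus_{\tau\in\Phi(\sigma)}\tau$ for a unique $G$-orbit $\Phi(\sigma) \subseteq \widehat K$. Because $\rho|_K$ is a sum of $\dim V$ copies of the trivial representation of $K$, one has $(\cR\otimes\sigma)|_K = (\dim V)(\sigma|_K)$, so every irreducible summand $\sigma'$ of $\cR\otimes\sigma$ obeys $\Phi(\sigma') = \Phi(\sigma)$. Hence $\Phi$ is constant on each (undirected) connected component of $\cQ(G,\cR)$, showing that the number of components is at most $|\widehat K / G|$.

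For the converse, I would show that $\Phi(\sigma_i) = \Phi(\sigma_j)$ forces a walk from $\sigma_i$ to $\sigma_j$ in the quiver. Since $\sigma_i|_K$ and $\sigma_j|_K$ share a common constituent, $\sigma_j \otimes \sigma_i^*$ contains the trivial representation of $K$ upon restriction, so some irreducible summand $\tau$ of $\sigma_j\otimes\sigma_i^*$ satisfies $K \subseteq \ker\tau$; equivalently $\tau$ descends to $G/K$. On $G/K$ the representation $\cR\oplus\cR^*$ is faithful, so the Burnside--Brauer theorem yields an integer $m$ with $\tau \subseteq (\cR\oplus\cR^*)^{\otimes m}$. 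Tensoring by $\sigma_i$ gives $\sigma_j \subseteq \sigma_i\otimes(\cR\oplus\cR^*)^{\otimes m}$; distributing the product and iteratively decomposing picks out intermediate irreps $\sigma_i = \sigma^{(0)}, \sigma^{(1)}, \ldots, \sigma^{(m)} = \sigma_j$ with $\sigma^{(k+1)}$ a constituent of $\sigma^{(k)}\otimes\epsilon_{k+1}$ for some $\epsilon_{k+1}\in\{\cR,\cR^*\}$. Each step traverses an edge of $\cQ(G,\cR)$ in one of the two directions, producing the desired undirected walk.

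Finally I would count $|\widehat K /G|$ using Brauer's permutation lemma: the conjugation action of $G/K$ on $K$ induces permutations of both the conjugacy classes of $K$ and of $\widehat K$ with the same cycle structure, hence the same number of orbits. Since a $G/K$-orbit on $K$-conjugacy classes is exactly a single $G$-conjugacy class contained in $K$, this completes the count. The main obstacle is the converse direction, where one must translate the abstract coincidence of Clifford orbits into an actual path in the quiver---Burnside--Brauer is the tool that bridges this gap, converting the existence of a common irreducible constituent in two restrictions into a concrete sequence of tensor products whose intermediate irreps give the edges of the walk.
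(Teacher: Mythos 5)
Your argument is correct in substance, but note that one sentence has the inequality pointing the wrong way: the Clifford-theoretic observation that $\Phi$ is constant on components shows that each component lies inside a single fibre of $\Phi$, hence that the number of components is at \emph{least} $|\widehat{K}/G|$; it is your converse step (fibres of $\Phi$ are connected by Burnside--Brauer walks) that delivers ``at most''. Since you prove both halves the equality survives, but you should also record that $\Phi$ is surjective --- every $\tau\in\widehat{K}$ lies in $\Phi(\sigma)$ for any irreducible constituent $\sigma$ of $\mathrm{Ind}_K^G\tau$, by Frobenius reciprocity --- so that the number of nonempty fibres really is $|\widehat{K}/G|$. With those two small repairs the chain of steps (Clifford restriction, descent of $\tau$ to $G/K$, Burnside--Brauer applied to the faithful representation $\cR\oplus\cR^*$ of $G/K$, and Brauer's permutation lemma for the final count) is sound.

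For comparison: the paper does not prove this theorem at all --- it quotes it from \cite{browne_2021}, where the argument is spectral rather than Clifford-theoretic. There one uses that the columns of the character table are eigenvectors of the adjacency matrix, $\sum_j a_{ij}\chi_j(g)=\chi_\cR(g)\chi_i(g)$, so the Perron eigenvalue $\chi_\cR(1)=\dim V$ occurs with multiplicity equal to the number of conjugacy classes on which $\chi_\cR(g)=\chi_\cR(1)$, i.e.\ classes contained in $\ker\rho$, and Perron--Frobenius theory identifies that multiplicity with the number of connected components. That route is shorter; yours is longer but buys more, namely an explicit parametrisation of the components by $G$-orbits on $\widehat{K}$ (equivalently, by the representation-theoretic data entering \cref{eq:decomp-orb}), which is precisely the refinement the paper goes on to develop in \cref{sec:extanddecomp}. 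Indeed your step showing that every constituent $\sigma'$ of $\cR\otimes\sigma$ has $\Phi(\sigma')=\Phi(\sigma)$ is the combinatorial shadow of the statement, proved in \cref{sec:proof}, that two vertices can only be joined by an edge when their central characters agree.
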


\noindent From this we see that we obtain disconnected quivers precisely when the orbifold action is unfaithful. Furthermore, \cite{ringel} also gives us a description of the component connected to the trivial representation. It is simply the McKay quiver of $G/\ker\rho$ relative to the quotient representation of the orbifold action. Describing the remaining components is more subtle, but it is helpful to view the unfaithful representation $\rho$ as defining an extension of groups
\begin{gather}\label{eq:ext}
    1\rightarrow N \xrightarrow{\iota} G \rightarrow F \rightarrow 1,
\end{gather}
where $N=\ker\rho$ and $F=G/N$. These extensions can fall into three classes:
\begin{enumerate}
    \item If we have that $\iota(N)$ does not lie in the center of $G$, then we call the extension non-central. In particular, this is the case whenever $N$ is non-abelian.
    \item If $\iota(N)\leq Z(G)$, then the extension is called central. Such extensions are classified by $\HH^2(F,N)$, so \cref{eq:ext} yields a (possibly trivial) 2-cocycle of this group.
    \item Finally, if $\iota(N)\leq Z(G)\cap G^{(1)}$, where $G^{(1)}=[G,G]$ is the derived subgroup of $G$, then the extension is called a stem extension.
\end{enumerate}
In the latter two cases we can describe the quiver decomposition rather explicitly\footnote{The non-central extension case is much more involved from a group cohomology point of view. We hope to return to this case in a future work.}. 

We recall that there is a set of maximal stem extension which correspond to taking $N$ to be the Schur muliplier $\HH^2(F,\bbCt)$ \cite{gruenberg1970cohomological}. The resulting extension groups are called Schur covering groups or representation groups for $F$. Schur covering groups have the remarkable property that every projective representation of $F$ lifts to a linear representation of $G$ \cite{karpilovsky1985projective}. Therefore, if $G$ is a Schur cover of $F$, then the McKay quiver relative to $\rho$ is simply the collection of all torsional quivers of $G/N$ with each component corresponding to a different value of $\alpha\in\HH^2(F,\bbCt)$. Indeed, this fact was exploited in \cite{Feng:2000mw, Feng:2000af} to generate quivers with discrete torsion easily without appealing to projective characters and twisted group algebras. In this form we have that
\begin{gather}\label{eq:schurcov}
    \cQ(G, \rho) = \bigsqcup_{\alpha\in\HH^2(F,\bbCt)} \cQ_\alpha(F, \tilde{\rho}),
\end{gather}
where $\tilde{\rho}$ denotes the quotient representation of $\rho$ on $F$.

We can refine this further. Suppose that \cref{eq:ext} is a stem extension but $G$ is not a Schur cover of $F$. Then it is known that this extension is the homomorphic image of some Schur cover $G^*$ \cite{gruenberg1970cohomological,rotman2008introduction}. This means that we have the commutative diagram
\begin{gather}
\begin{aligned}
    \xymatrix{1 \ar[r] & \HH^2(F,\bbCt) \ar[r]^{\iota^*}\ar[d]^{j} & G^* \ar[r]^{\xi^*}\ar[d]^k & F \ar[r] \ar[d] & 1 \\
              1 \ar[r] & A \ar[r]^{\iota} & G \ar[r]^{\xi} & F \ar[r] & 1}
\end{aligned}
\end{gather}
Since $G$ is the homomorphic image of some Schur cover $G^*$ we note that, by Schur's lemma, the image of the irreducible representations of $G^*$ under $k$ are either the trivial representation or the irreducible representations of $G$. By then lifting the quotient representation of $F$ to $G^*$, we see that the McKay quiver relative to $\rho$ is a full subquiver of the McKay quiver relative to the lift on $G^*$. Furthermore, since $A$ is the image of $\HH^2(F,\bbCt)$ we expect that the McKay quiver relative to $\rho$ will only contain lifts of projective representations corresponding to a subgroup of $\HH^2(F,\bbCt)$ isomorphic to $A$\footnote{Here we are noting that $A\cong \HH^2(F,\bbCt)/\ker j$ by the first isomorphism theorem. Since both $A$ and $\HH^2(F,\bbCt)$ are finite abelian groups, there exists a subgroup of $\HH^2(F,\bbCt)$ isomorphic to $A$.}. An example of this will be shown in \cref{ex:stem}.

It turns out that the central, but not stem, extension case is largely determined by the stem extension case. In order to motivate this, we define the notion of isoclinicism of groups.

\begin{defn}
    Two finite groups $G_1$ and $G_2$ are said to be isoclinic if there exist isomorphisms $\eta:G_1/Z(G_1)\rightarrow G_2/Z(G_2)$ and $\nu:G_1^{(1)}\rightarrow G_2^{(1)}$ commuting with the commutator map.
\end{defn}

\noindent Isoclinic groups, in a sense, encode similar representations. The linear representations of such groups are lifts of projective representations of the inner automorphism group $\rm{Inn}(G)\cong G/Z(G)$ where each representation is lifted with an appropriate multiplicity. Stated more concretely, let $m_{i}^{d}$ denote the number of degree $d$ representations of $G_i$. If $G_1$ and $G_2$ are isoclinic, then \cite{beyl1982group}
\begin{gather}
    \frac{m_1^d}{m_2^d}=\frac{|G_1|}{|G_2|},
\end{gather}
for any $d\in\bbN$. In fact, all Schur covering groups of a given group are isoclinic \cite{karpilovsky1985projective}. 

Since isoclinic groups encode the same projective representations up to multiplicity, it should be clear that the quivers for isoclinic groups relative to lifts of the same linear representation are related. For the case of Schur covers, it was remarked in \cite{Feng:2000af} that differing the Schur cover will lead to the same quiver. We also posit that the same holds for isoclinic non-Schur covers of the same order and present a derivation of this in \cref{sec:proof}.

The following theorem due to P. Hall (see \cite{gruenberg1970cohomological} for a modern proof) allows us to use isoclinicism to reduce the central extension case to a stem case. 
\begin{theorem}\label{thm:stem}
     Let $E$ be a central extension of $G$ by $A$. Then $E$ is isoclinic to a stem extension $F$ of $G$ by some abelian group $B$.
\end{theorem}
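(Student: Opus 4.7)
The plan is to realize the stem extension as a quotient of $E$ (after possibly first enlarging $E$ in an isoclinic fashion) by a carefully chosen central subgroup that strips away precisely the part of $A$ lying outside $E^{(1)}$. The natural candidate for the kernel of the target stem extension is $A' := A \cap E^{(1)}$, since any stem extension of $G$ isoclinic to $E$ must have kernel embedded in both the center and the derived subgroup of the new group, and $A'$ is exactly the piece of $A$ with this property. I would first search for a complement $C \leq A$ with $A = A' \oplus C$, and then define $F := E/C$ with kernel $B := A/C \cong A'$.

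Given such a $C$, three verifications complete the argument. First, $F$ is a central extension of $G$ by $B$: centrality of $B$ in $F$ follows from $A \leq Z(E)$, while $B \leq F^{(1)}$ follows from $A = A' C$ together with $A' \leq E^{(1)}$, so that the image of $A$ in $F = E/C$ lies in $E^{(1)}C/C = F^{(1)}$. Second, the quotient $E \to F$ induces an isomorphism on derived subgroups, since $\ker(E^{(1)} \to F^{(1)}) = E^{(1)}\cap C \leq A' \cap C = 1$. Third, an isomorphism $E/Z(E) \xrightarrow{\sim} F/Z(F)$ arises because any $xC \in Z(F)$ satisfies $[x,y] \in C \cap E^{(1)} = 1$ for all $y \in E$, whence $x \in Z(E)$ and thus $Z(F) = Z(E)/C$. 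The quotient homomorphism commutes with commutator brackets automatically, so the pair of induced isomorphisms supplies the required isoclinism between $E$ and $F$.

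The principal obstacle is that a complement $C$ of $A'$ in $A$ need not exist for a general abelian group $A$, as is already visible from $A = \bbZ/4$ with $A' = 2\bbZ/4$. To sidestep this I would first replace $E$ by a larger isoclinic group $\tilde E$ in which the analogous splitting is automatic. Pick a surjection $\phi: \tilde A \to A$ from an abelian group $\tilde A$ in which $\phi^{-1}(A')$ is a direct summand, for instance by taking $\tilde A$ to be free abelian of suitable rank surjecting onto $A$. One then constructs a central extension $1 \to \tilde A \to \tilde E \to G \to 1$ fitting into a commutative diagram with $1 \to A \to E \to G \to 1$, in which the induced surjection $\tilde E \to E$ has central kernel $\ker\phi$ meeting $\tilde E^{(1)}$ trivially; the same three verifications applied to this surjection then show $\tilde E$ is isoclinic to $E$. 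Applying the main construction to $\tilde E$ produces a stem extension $F$ of $G$ isoclinic to $\tilde E$, and hence, by transitivity of isoclinism, to $E$. The bulk of the technical care will lie in making the enlargement precise, in particular in verifying that such a $\tilde E$ can always be built and that its kernel $\ker\phi$ can be arranged to avoid $\tilde E^{(1)}$.
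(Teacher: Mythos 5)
The paper does not actually prove this statement itself --- it is P.~Hall's theorem, quoted with a pointer to \cite{gruenberg1970cohomological} --- so your argument has to stand on its own. Its first half does: when $A' = A\cap E^{(1)}$ admits a complement $C$ in $A$, the quotient $F=E/C$ is a stem extension of $G$ by $A/C\cong A'$, and your three verifications (centrality, $E^{(1)}\cap C\le A'\cap C=1$, $Z(F)=Z(E)/C$) correctly yield the isoclinism. You are also right that the complement can fail to exist; a concrete instance is the central product $E=\bbZ_4\circ Q_8$ viewed as a central extension of $\bbZ_2\times\bbZ_2$ by $A=\bbZ_4$, where $A\cap E^{(1)}=2A$ has no complement and, in fact, \emph{no} quotient of $E$ is an isoclinic stem extension at all (the only quotient of order $8$ is abelian, while the answer must be $D_4$ or $Q_8$), so some enlargement step is genuinely unavoidable.

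The gap is in the repair. If $\tilde A$ is free abelian and $\phi:\tilde A\twoheadrightarrow A$, then $\phi^{-1}(A')$ has finite index $[A:A']$ in $\tilde A$, so a direct complement would be isomorphic to the finite group $A/A'$ sitting inside a torsion-free group, hence trivial; thus $\phi^{-1}(A')$ is a direct summand only when $A'=A$, i.e.\ precisely when no repair was needed. Worse, the condition that $\ker\phi$ meet $\tilde E^{(1)}$ trivially forces $\tilde A\cap\tilde E^{(1)}\cong A\cap E^{(1)}$, a nontrivial finite subgroup of $\tilde A$ whenever $A'\neq 1$ --- impossible for torsion-free $\tilde A$. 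Even for a general $\tilde A$, the existence of $\tilde E$ is not automatic: lifting the class of $E$ along $\phi_*:\HH^2(G,\tilde A)\to\HH^2(G,A)$ is obstructed in $\HH^3(G,\ker\phi)$, and arranging $\ker\phi\cap\tilde E^{(1)}=1$ essentially presupposes the stem extension you are trying to build (in the example above the choice that works is $\tilde E=Q_8\times\bbZ_4$ with $\tilde A=\bbZ_2\times\bbZ_4$). The standard proof goes in the opposite, injective direction: embed $A/A'$ in a divisible abelian group $D$ and use injectivity to extend $A\to A/A'\hookrightarrow D$ over $E/E^{(1)}$; equivalently, for finite groups, realise the required stem extension as a quotient of a Schur cover of $G$ via the map $\HH_2(G,\bbZ)\to A$ whose image is $A\cap E^{(1)}$, and then check isoclinism as you did. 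Replacing the free-abelian enlargement by one of these constructions would close the gap.
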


\noindent Note that isoclinism requires that the orders of the derived subgroups of $E$ and $F$ coincide. Using the identity
\begin{gather}
    \left[\frac{E}{A},\frac{E}{A}\right]=\frac{A[E,E]}{A},
\end{gather}
we arive at the condition
\begin{gather}
    |B|=|A\cap [G,G]|.
\end{gather}
Since the RHS divides $|A|$, we must have $|B|$ divides $|A|$ as well. As such, given a central extension $1\rightarrow A\rightarrow E \rightarrow G\rightarrow 1$ isoclinic to a stem extension $1\rightarrow B\rightarrow F \rightarrow G\rightarrow 1$, we must have that $|F|$ divides $|E|$. Since $k=|E|/|F|$ is an integer, we have that $E$ is of the same order as the isoclinic group $F\times \bbZ_k$. The quiver for $F\times \bbZ_k$ is easily derived and leads to the following proposition.
\begin{proposition}\label{prop:dis}
    Let $E$ be a central extension of $G$ isoclinic to a stem extension $F$ of $G$ and $\cR$ a faithful representation of $G$ with lifts $\cR_E$ and $\cR_F$. Writing $|E|/|F|$ as $k\in\bbZ$, we have 
    \begin{gather}
        \cQ(E,\cR_E) = \bigsqcup_{i=1}^{k} \cQ(F,\cR_F).
    \end{gather}
    In other words, the quiver for the extension $E$ is simply a number of copies of the quiver for the stem extension $F$.
\end{proposition}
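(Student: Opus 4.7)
The plan is to reduce to a direct-product computation via isoclinism and then compute the quiver directly. By \cref{thm:stem} $E$ is isoclinic to a stem extension $F$ of $G$, and the order bookkeeping in the paragraph preceding the statement gives $|E|=k|F|$ with $k\in\bbN$. The first step is to introduce the auxiliary group $F\times\bbZ_k$: the projection onto the first factor composed with $F\to G$ exhibits it as a central extension of $G$, its centre is $Z(F)\times\bbZ_k$ while its derived subgroup remains $F^{(1)}$ with unchanged commutator map, so it is isoclinic to $F$ (hence to $E$) and has the same order $k|F|=|E|$ as $E$.

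The second step invokes the claim, posited in \cref{sec:extanddecomp} and derived in \cref{sec:proof}, that two isoclinic central extensions of $G$ of equal order have identical McKay quivers relative to lifts of the same faithful representation of $G$. Applied to $E$ and $F\times\bbZ_k$, this gives
\begin{gather}
\cQ(E,\cR_E)=\cQ\bigl(F\times\bbZ_k,\widetilde{\cR}\bigr),
\end{gather}
where $\widetilde{\cR}$ denotes the pull-back of $\cR_F$ along the projection $F\times\bbZ_k\to F$; in particular $\widetilde{\cR}$ acts trivially on the $\bbZ_k$-factor.

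The third step is the explicit computation on $F\times\bbZ_k$. Its irreducible representations are external tensor products $\rho_i\otimes\chi_j$ with $\rho_i\in\widehat{F}$ and $\chi_j\in\widehat{\bbZ_k}$, and triviality of $\widetilde{\cR}$ on $\bbZ_k$ yields
\begin{gather}
\widetilde{\cR}\otimes(\rho_i\otimes\chi_j)=(\cR_F\otimes\rho_i)\otimes\chi_j=\bigoplus_l a_{il}\,(\rho_l\otimes\chi_j),
\end{gather}
where $a_{il}$ are the adjacency entries of $\cQ(F,\cR_F)$. Because the $\bbZ_k$-character $\chi_j$ is never altered, each fixed $j$ spans a sub-quiver isomorphic to $\cQ(F,\cR_F)$ and vertices with distinct $j$ are never connected; this is the claimed disjoint union of $k$ copies.

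The main obstacle is the second step: isoclinic groups generically have different representation theories, with irrep multiplicities scaled by the order ratio, so one must use the equal-order hypothesis to promote the abstract matching of projective data on $G=E/Z(E)$ to a genuine isomorphism of McKay quivers that is compatible with tensoring by the lift of $\cR$. Once that assertion of \cref{sec:proof} is in hand, the product computation above closes the argument.
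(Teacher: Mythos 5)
Your proposal is correct and follows essentially the same route as the paper: the paragraph preceding the proposition establishes $|E|=k|F|$ and replaces $E$ by the equal-order isoclinic group $F\times\bbZ_k$, invokes the isoclinism invariance of McKay quivers proved in \cref{sec:proof}, and then reads off the quiver of $F\times\bbZ_k$ as $k$ disjoint copies of $\cQ(F,\cR_F)$. Your write-up merely makes explicit the two points the paper leaves implicit (that $F\times\bbZ_k$ is indeed an isoclinic central extension of $G$, and the tensor-product computation showing the $\bbZ_k$-character is never altered), which is a faithful filling-in rather than a different argument.
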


We end this section with a brief comparison with \cite{Hellerman:2006zs,Pantev:2005rh,Sharpe:2022ene}. We have seen that when \cref{eq:ext} is a stem extension, the group $N=\ker\rho$ determines a subgroup of $\HH^2(G/N,\bbCt)$ from which we obtain a subset of all torsional quivers. The central extension case then follows by moving to an isoclinic stem extension and appropriately replicating the quiver. In each case we see that we only obtain a disjoint union of orbifolds of the type $[X/(G/N)]$ with discrete torsion, and multiplicity, determined by a subgroup of $\HH^2(G/N,\bbCt)$. This is in agreement with the `banded' examples of \cite{Hellerman:2006zs}.

\medskip
\noindent {\bf Note on vertex labelling.} In the quiver diagrams we present we have chose to keep the vertices unlabelled as opposed to furnishing them with the dimension of the corresponding representation. In the central and stem extension case, all dimensions would match what one would expect from the decomposition of QFTs. However, in the non-central case there is necessarily a mismatch in the dimensions due to the action of $\ab{G}$ on the quiver. In particular, the action of $\ab{G}$ causes any two components containing one dimensional representations to be equal \cite{browne_2021}. Despite this, the overall shape of the graph is in total agreement.

\section{Examples}\label{sec:examples}
In this section we will give several examples which show how McKay quivers capture decomposition in several differing settings. All major group data for the groups we cover is presented in \cref{app:gr}.
\subsection{\texorpdfstring{$D_4$}{D4} revisited}

Let's start by analysing the example from \cref{sec:d4} in our framework. We wish to realise the central extension
\begin{gather}\label{eq:d4}
    1\rightarrow \bbZ_2 \rightarrow D_4 \rightarrow \bbZ_2\times\bbZ_2\rightarrow 1,
\end{gather}
by choosing a representation of $D_4$ which trivialises exactly the $\bbZ_2$ center. This is easy to see from the character table of $D_4$:
\begin{gather*}
    \begin{tabular}{c|ccccc}
    \hline \hline 
    & $C_1^{(1)}$ & $C_2^{(2)}$ & $C_3^{(2)}$ & $C_4^{(2)}$ & $C_5^{(1)}$\\  \hline
$\rho_1$ &1 & 1 & 1 & 1 & 1 \\
$\rho_2$ &1 & $-1$ & $-1$ & 1 & 1 \\
$\rho_3$ &1 & $-1$ & 1 & $-1$ & 1 \\
$\rho_4$ &1 & 1 & $-1$ & $-1$ & 1 \\
$\rho_5$ &$2$ & 0 & 0 & 0 & $-2$ \\ \hline\hline
\end{tabular}
\end{gather*}
Here $C_i^{(j)}$ denotes a conjugacy class of size $j$. The center corresponds exactly to all the size one conjugacy classes, so we have that $Z(D_4)=C_1^{(1)}\cup C_5^{(1)}$. Recall that a representation $\rho$ trivialises a conjugacy class $C_i^{(j)}$ if its character satisfies
\begin{gather}
    \chi_\rho(g) = \chi_\rho(1_G),
\end{gather}
for a representative $g$ of $C_i^{(j)}$. This immediately tells us that $\rho_2\oplus\rho_3\oplus\rho_4$ trivialises exactly $Z(D_4)$, so we take this as the representation to compute the McKay quiver relative to. The resulting quiver is shown in \cref{fig:z2z2}.

\begin{figure}
    \centering
    \includegraphics[scale=0.35]{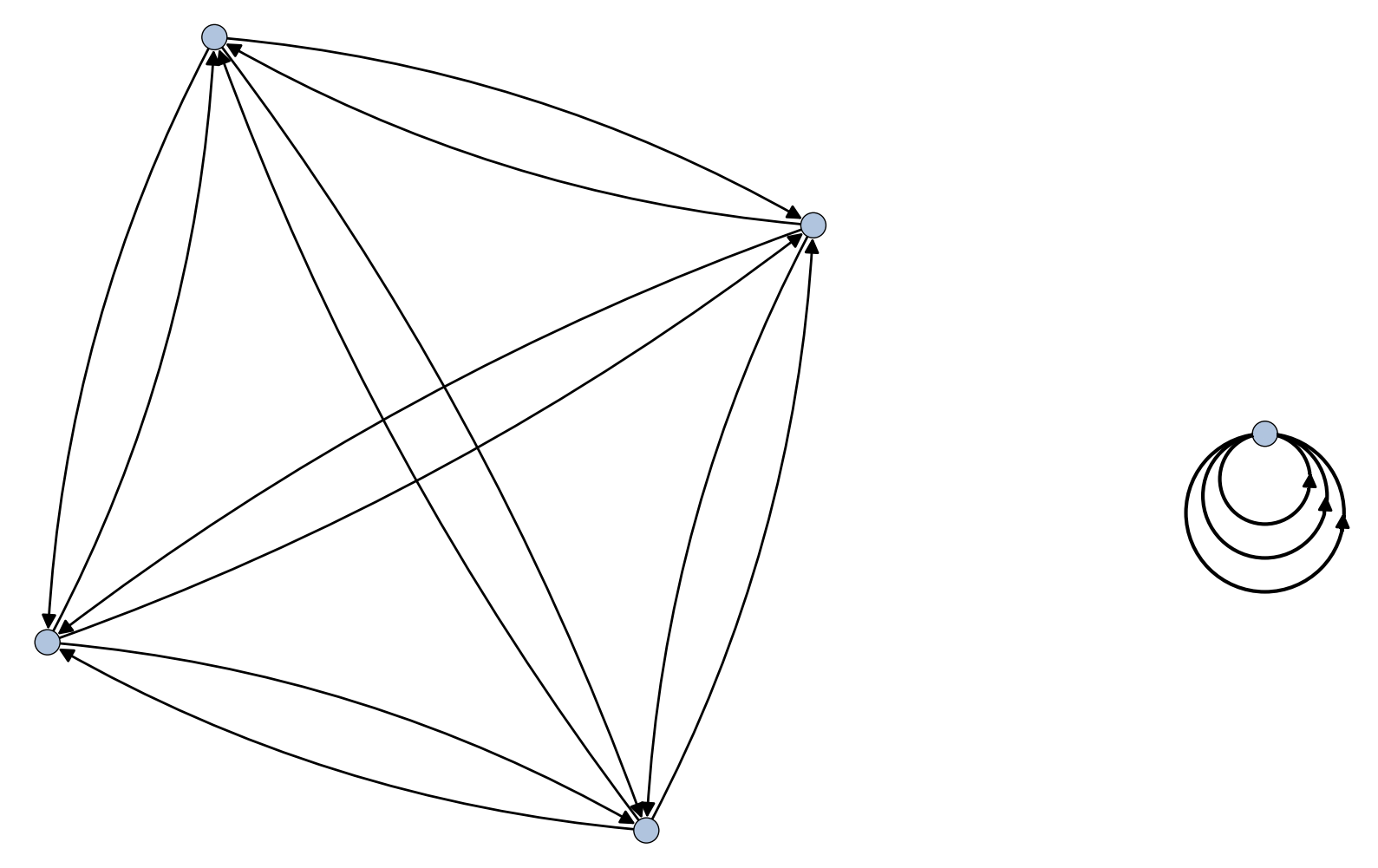}
    \caption{The McKay quiver for $D_4$ which trivialises the center. On the left is the standard McKay quiver for $\bbZ_2\times\bbZ_2$ while the quiver on the right is the projective McKay quiver corresponding to the non-trivial element of $\HH^2(\bbZ_2\times\bbZ_2,\bbCt)=\bbZ_2$.}
    \label{fig:z2z2}
\end{figure}

Comparing with \cite{Aspinwall:2000xs}, we recognise these as precisely the regular McKay quiver for $\bbZ_2\times\bbZ_2$ and the only possible projective McKay quiver for $\bbZ_2\times\bbZ_2$. The fact that we obtain all the torsional quivers arises from the fact that $D_4$ is a Schur cover of $\bbZ_2\times\bbZ_2$. Indeed, it is easy to see that $\HH^2(\bbZ_2\times\bbZ_2,\bbCt)=\bbZ_2$ and that \cref{eq:d4} is a stem extension. We therefore see that this is consistent with the QFT decomposition
\begin{gather}
    \QFT([X/D_4]) = \QFT\big( [X/(\bbZ_2 \times
\bbZ_2)] \sqcup [X/(\bbZ_2 \times \bbZ_2)]_{\rm{d.t.}} \big).
\end{gather}

\subsection{More involved examples}\label{sec:nontrivialex}
The classification of finite subgroups of $\SU(3)$ includes two infinite series $\Delta(3n^2)$ and $\Delta(6n^2)$ \cite{Blichfeldt1905,yau1993gorenstein}. When $n=2$ we have the following incidental isomorhpisms
\begin{gather}
    \Delta(3\cdot 2^2)\cong \Alt(4), \quad \Delta(6\cdot 2^2)\cong \Sym(4)
\end{gather}
Since the Schur covering groups of these have been well studied \cite{1992projective}, we shall use both of these to generate examples of decomposition. 
    
\begin{figure}
    \centering
    \includegraphics[scale=0.41]{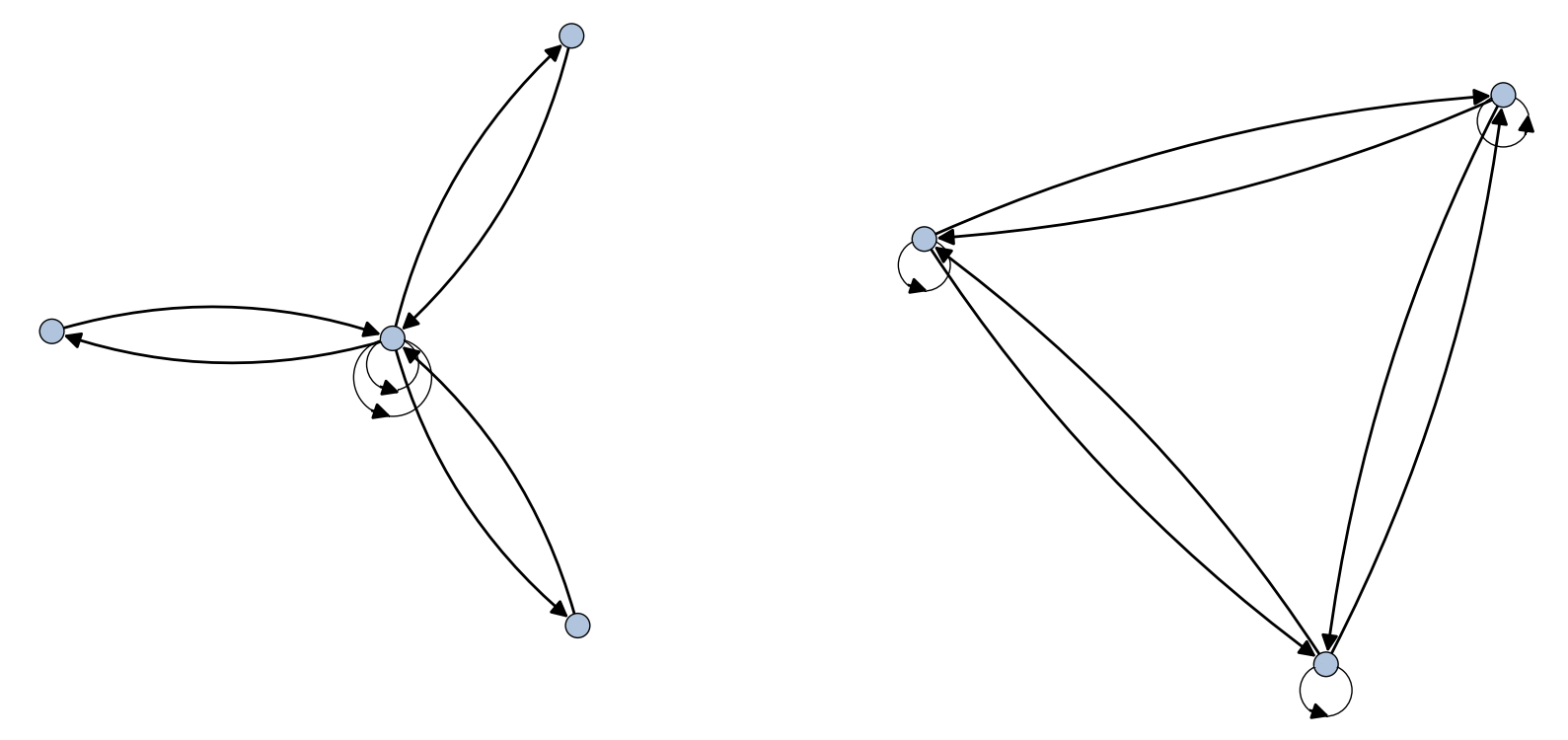}
    \caption{The unfaithful McKay quiver for the binary tetrahedral group. The component on the left realises the standard McKay quiver for $\Alt(4)$, while the righthand quiver gives the torsional quiver for $\Alt(4)$.}
    \label{fig:tetalt}
\end{figure}    
    
First of all, we note that the Schur multiplier of $\Alt(4)$ is simply $\bbZ_2$ and a Schur covering group can be given by the stem extension
\begin{gather}
    1\rightarrow \bbZ_2 \rightarrow \rm{SL}_2(\bbF_3)\rightarrow \Alt(4)\rightarrow 1.
\end{gather}
Interestingly, the Schur cover $\rm{SL}_2(\bbF_3)$ is isomorphic to the binary tetrahedral group $\cT$-- meaning it can be found as a subgroup of $\SU(3)$. As such, by taking the orbifold action to be the three dimensional irreducible representation that trivialises the $\bbZ_2$ center, we get the decomposition
\begin{gather}
    \QFT([\bbC^3/\cT]) = \QFT([\bbC^3/\Alt(4)]\sqcup[\bbC^3/\Alt(4)]_{\rm{d.t.}}).
\end{gather}
We show the quiver for this decomposition in \cref{fig:tetalt}.

A similar computation can also be done for the case of $\Sym(4)$. However, instead of simply repeating the above computation, we use this orbifold to explore a more nuanced decomposition predicted by \cite{Robbins:2020msp} which can be correctly reproduced in our framework. Consider $\Sym(4)$ with a non-central trivially acting $\bbZ_2\times\bbZ_2$ subgroup. Then decomposition predicts that
\begin{gather}\label{eq:s4s3}
    \QFT([X/\Sym(4)]_{\rm{d.t.}}) = \QFT([X/\Sym(3)]).
\end{gather}
To deal with an orbifold with both discrete torsion and a non-central non-trivially acting subgroup, we need to take a two step approach. 

\begin{figure}
    \centering
    \includegraphics[scale=0.42]{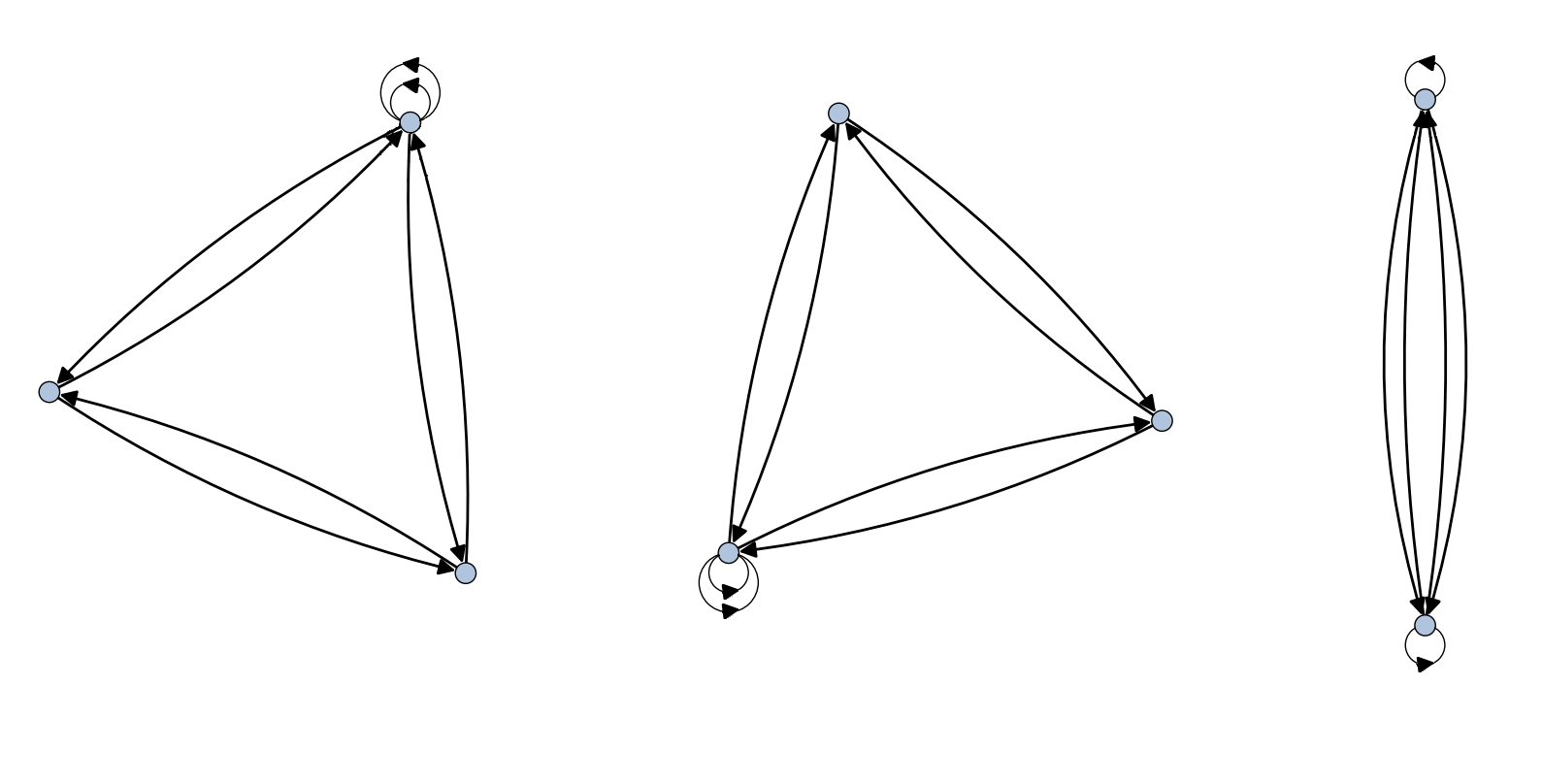}\vspace*{-0.25in}
    \caption{The McKay quiver for $\rm{GL}_2(\bbF_3)$ that realises the lift of the representation that trivialises the $\bbZ_2\times\bbZ_2$ subgroup of $\Sym(4)$.}
    \label{fig:s4schur}
\end{figure}

\medskip
\noindent {\bf Step 1.} First we compute the quiver of $[\bbC^3/\Sym(4)]$ with respect to the action that trivialises the $\bbZ_2\times\bbZ_2$ subgroup. This action corresponds to viewing $\Sym(4)$ as the extension
\begin{gather}
    1\rightarrow \bbZ_2 \times \bbZ_2 \rightarrow \Sym(4) \rightarrow \Sym(3) \rightarrow 1.
\end{gather}
The McKay quiver of this set up then gives rise to the two rightmost components in \cref{fig:s4schur}. This is itself an additional example of decomposition, but we postpone the details until after the next step.

\medskip
\noindent {\bf Step 2.} Now to find the quiver of $[\bbC^3/\Sym(4)]$ with discrete torsion we take the lift of the representation used above to a Schur cover. A possible Schur covering group is given by the stem extension
\begin{gather}
    1\rightarrow \bbZ_2 \rightarrow \rm{GL}_2(\bbF_3) \rightarrow \Sym(4) \rightarrow 1.
\end{gather}
The lift of the representation now has a kernel which spans three conjugacy classes of $\rm{GL}_2(\bbF_3)$, corresponding to the lift of the $\bbZ_2 \times \bbZ_2$ subgroup of $\Sym(4)$. This gives us the three component McKay quiver in \cref{fig:s4schur}. Since $\rm{GL}_2(\bbF_3)$ is a Schur cover, this corresponds to the decomposition
\begin{align}
    \QFT([\bbC^3/\rm{GL}_2(\bbF_3)]) = \QFT([\bbC^3/\Sym(4)] \sqcup [\bbC^3/\Sym(4)]_{\rm{d.t.}}),
\end{align}
with a trivially acting $\bbZ_2 \times \bbZ_2\trianglelefteq \Sym(4)$.

Therefore, the correct quiver for $[\bbC^3/\Sym(4)]$ with discrete torsion and a trivially acting $\bbZ_2\times \bbZ_2$ is the quiver given by taking away the components that arise from step 1 in \cref{fig:s4schur}. This leaves us with just the leftmost component. By inspecting the character table of $\Sym(3)$ it is not hard to obtain such a quiver, thus confirming \cref{eq:s4s3}. Furthermore, we can identify the rightmost component as a $\bbZ_2$ quiver, giving us the additional decomposition
\begin{gather}
    \QFT([\bbC^3/\Sym(4)]) = \QFT([\bbC^3/\Sym(3)]\sqcup [\bbC^3/\bbZ_2]),
\end{gather}
as predicted by \cref{eq:decomp-orb}.

\subsection{Stem but not Schur}\label{ex:stem}
Now consider the group $\Delta(108)=\Delta(3\cdot 6^2)$. This group has a $\bbZ_3$ center whose action we wish to trivialise on $\bbC^3$. To do so, we choose one of the two three dimensional irreps that trivialise exactly the center. This leads to the extension
\begin{gather}
        1\rightarrow \bbZ_3 \rightarrow \Delta(108) \rightarrow \bbZ_3 \times \rm{Alt}(4) \rightarrow 1.
\end{gather}
Furthermore, it is easy to see that this is a stem extension. The quiver we obtain is given in \cref{fig:delta108}. While this is a valid stem extension, $\Delta(108)$ is not a Schur cover of $\bbZ_3\times\Alt(4)$. Indeed, we have that
\begin{gather}\label{eq:z3altext}
    \HH^2(\bbZ_3,\bbCt)=1,\quad \HH^2(\Alt(4),\bbCt)=\bbZ_2, \quad \ab{\Alt(4)}=\bbZ_3.
\end{gather}
Therefore, by \cref{thm:prod}, we have that
\begin{gather}
    \HH^2(\bbZ_3\times\Alt(4),\bbCt) = \bbZ_2\times\bbZ_3\cong \bbZ_6.
\end{gather}
As such, we expect to only see a subset of possible torsional quivers arising from this. To confirm that the additional components in \cref{fig:delta108} are indeed torsional variants of the $\bbZ_3\times\Alt(4)$ quiver, we must look at the Schur extension
\begin{gather}
    1\rightarrow \bbZ_6 \rightarrow Q_8 \rtimes \rm{He}_3 \rightarrow \bbZ_3\times\Alt(4) \rightarrow 1,
\end{gather}
where $Q_8\cong \rm{Dic}_2$ is the group of quaternions and $\rm{He}_3$ is the mod-3 Heisenberg group. Indeed, lifting a faithful representation of $\bbZ_3\times \Alt(4)$ to the covering group gives the 6 component quiver in \cref{fig:qhe}, three components of which are exactly those in \cref{fig:delta108}, thus confirming that we are only seeing the $\bbZ_3\leq \HH^2(\bbZ_3\times\Alt(4),\bbCt)$ subgroup of the torsional quivers when considering $\Delta(108)$.

\begin{figure}
    \centering
    \includegraphics[scale=0.33]{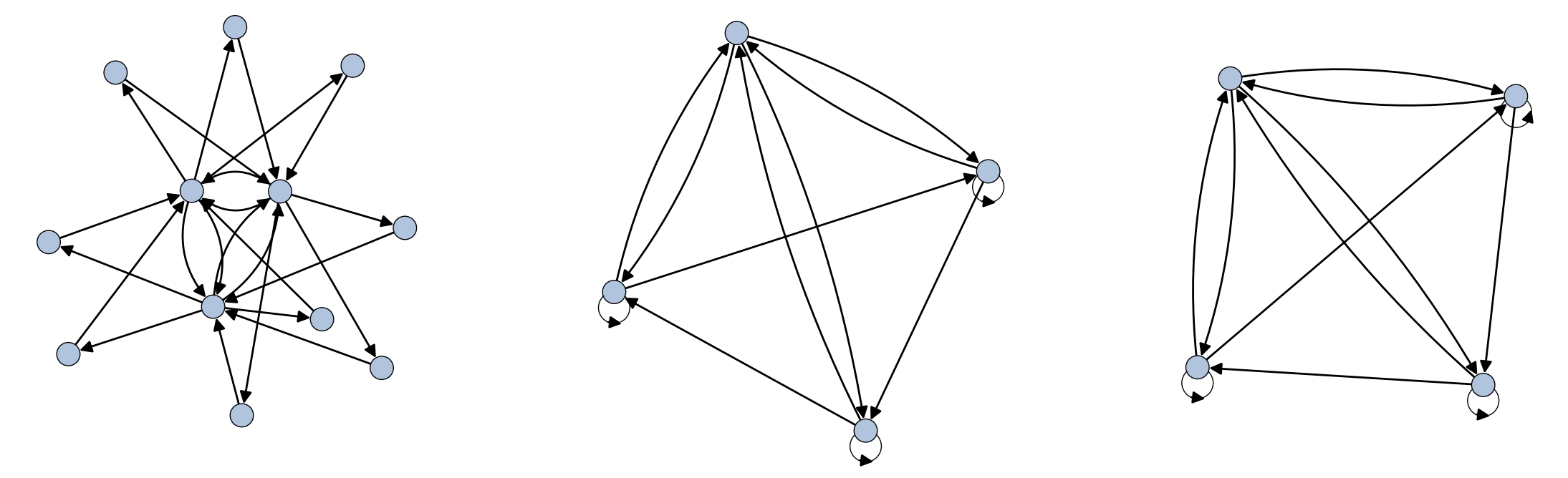}
    \caption{The quiver for $\Delta(108)$ that trivialises the $\bbZ_3$ center. We recognise the McKay quiver for $\bbZ_3\times\Alt(4)$ together with two additional components. These can then be identified as torsional quivers for $\bbZ_3\times\Alt(4)$ by considering the Schur extension \cref{eq:z3altext}.}
    \label{fig:delta108}
\end{figure}
\begin{figure}
    \hspace*{-0.33in}
    \includegraphics[scale=0.44]{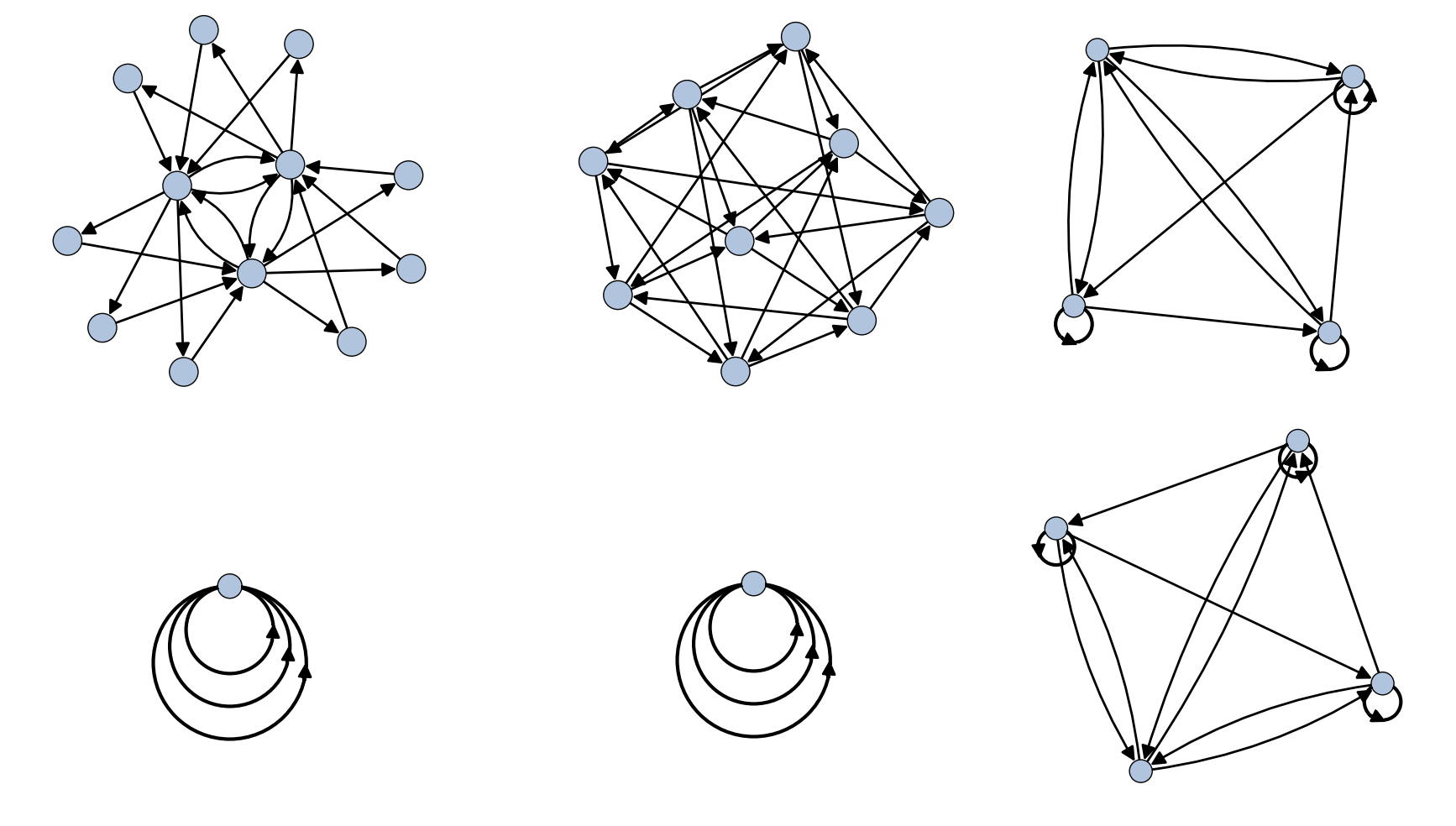}
    \caption{The disconnected McKay quiver for $Q_8\rtimes \rm{He}_3$. As this is a Schur cover for $\bbZ_3\times\Alt(4)$ we obtain all torsional quivers corresponding to discrete torsion labelled by $\alpha\in\HH^2(\bbZ_3\times\Alt(4),\bbCt)$.}
    \label{fig:qhe}
\end{figure}

As an illustration of \cref{prop:dis}, consider the central extension
\begin{gather}
    1\rightarrow \bbZ_6\rightarrow \bbZ_2\times \bbZ_3^2\cdot \Alt(4)\rightarrow \bbZ_3\times \Alt(4)\rightarrow 1,
\end{gather}
where $\bbZ_3^2\cdot\Alt(4)$ is the non-split extension of $\Alt(4)$ by $\bbZ_3^2$ acting via $\Alt(4)/\bbZ_2^2=\bbZ_3$. This is a non-stem central extension isoclinic to $\Delta(108)$, signalling the corresponding quiver should be exactly a double copy of the one in \cref{fig:delta108}. Indeed, plotting the quiver for $\bbZ_2\times \bbZ_3^2\cdot \Alt(4)$ confirms this.

\subsection{A Schur trivial example}
So far we have seen examples where the effective orbifold group accomodates enough discrete torsion so that the extension can see either all of the projective representations or a subset them. It is natural to ask what happens if we take a central extension of a group that cannot have any discrete torsion.

Consider the regular dihedral group of 10 elements $D_5$. We first note that we have
\begin{gather}
    \HH^2(D_5, \bbCt)=1, \quad \HH^2(D_5,\bbZ_2)=\bbZ_2.
\end{gather}
As such, we see that while the orbifold theory $[X/D_5]$ does not accommodate discrete torsion, there are two equivalence classes of $\bbZ_2$ central extensions of it. In particular, we have the two extensions
\begin{gather}
    \begin{aligned}
        \xymatrix{& & D_{10} \ar[dr] \\
        1\ar[r] &\bbZ_2 \ar[ur]\ar[dr] & & D_5\ar[r] & 1\\
        & & \rm{Dic}_5 \ar[ur]}
    \end{aligned}
\end{gather}
Note that $D_{10}\cong \bbZ_2 \times D_5$, so this corresponds to the trivial central extension while $\rm{Dic}_5$ provides a non-trivial extension. Since $D_5$ has trivial Schur multiplier, the only stem extension of it is the trivial extension by $1$. \Cref{thm:stem} then implies that any central extension of $D_5$ is isoclinic to $D_5$. As such, any central extension of $D_5$ by an abelian group $A$ simply gives rise to $|A|$ many copies of the $D_5$ quiver. For the case of $D_{10}$ and $\rm{Dic}_5$, this is drawn in \cref{fig:dic}.

\begin{figure}
    \centering
    \includegraphics[scale=0.4]{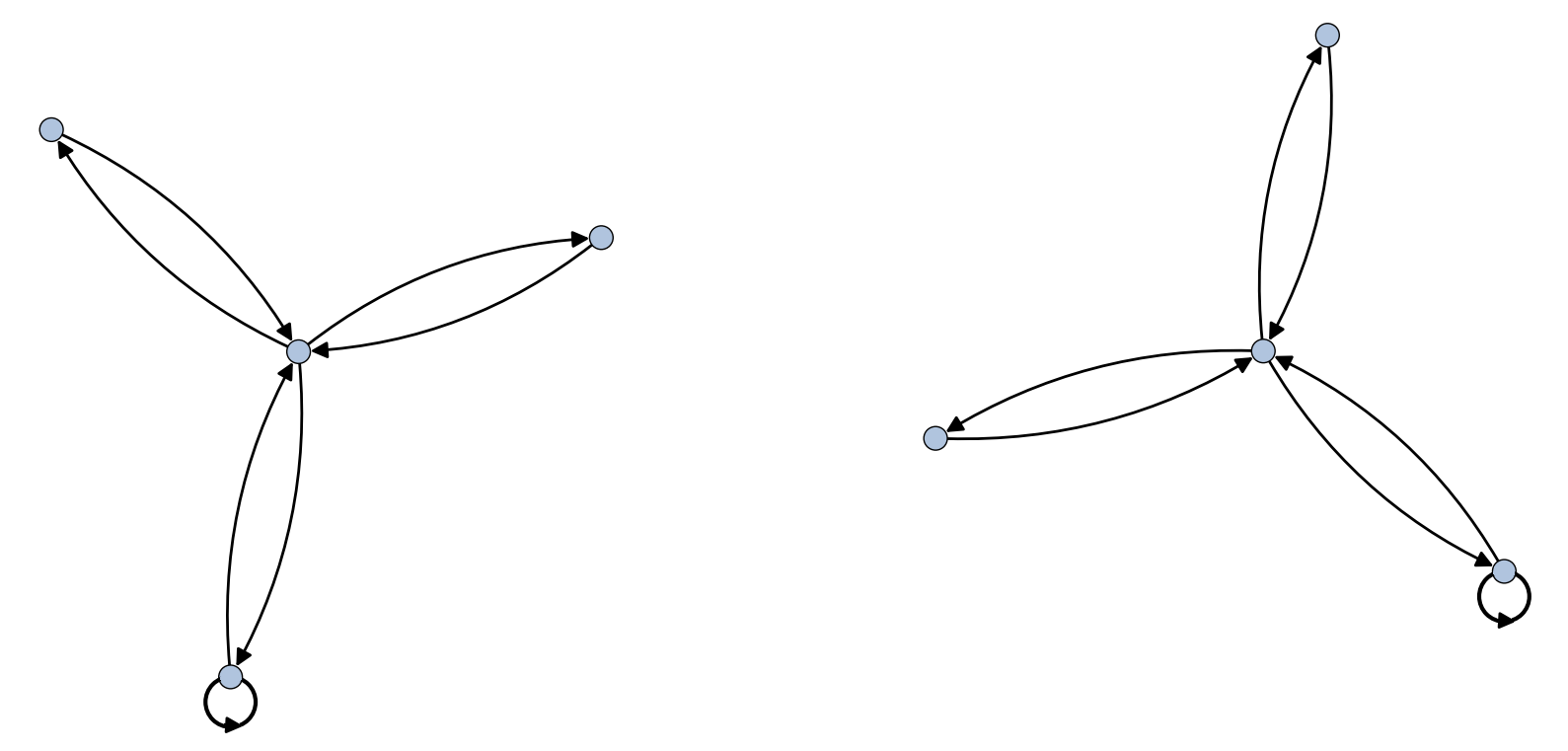}
    \caption{The McKay quiver for $\rm{Dic}_5$ which trivialises a central $\bbZ_2$ subgroup. Here we take the representation that lives in $\rm{U}(2)$ to illustrate that the decomposition is not unique to $\SU(n)$ representations.}
    \label{fig:dic}
\end{figure}

By using the above argument we can take any central extension of $D_5$ by some abelian group $A$ and obtain a quiver which is just a number of copies of the $D_5$ McKay quiver. This then provides many examples of isoclinic groups of the same order giving the same quiver whenever $\HH^2(D_5,A)$ is non-trivial.  

\section{Conclusion}
In this paper we have motivated the equivalence of the decomposition of QFTs and the decomposition of McKay quivers. In doing so, we derived formulae describing the disconnected components of such quivers for central extensions using purely group and representation theoretic means which agree with the decomposition of QFT formulae of \cite{Hellerman:2006zs, Sharpe:2022ene}.  

Indeed, we showed that orbifolds with a trivially acting central subgroup produce a McKay quiver capturing the decomposition features of the theory. Moreover, in the particular case of central and stem extensions, the McKay quiver components are simply a subset of the quivers for the same orbifold with and without discrete torsion. Furthermore, we have seen that McKay quivers relative to lifts of linear representations are sensitive only to the isoclinism class of the central extension, a feature described in \cref{sec:extanddecomp}. Finally, these properties provide a method to study the effects of decomposition when a non-central subgroup of the orbifold group acts trivially, both for linear and projective representations as shown in \cref{sec:nontrivialex}.

There are several possible further directions of this work:
\begin{enumerate}
    \item The formulae we derived are only valid for central extensions. The non-central extension case is much more complicated and, as seen in the $\Sym(4)$ example, can result in decompositions into orbifolds with non-isomorphic orbifold groups. It is natural to ask if our analysis can be extended to these cases to obtain a formula similar to \cref{eq:decomp-orb} to describe the general case. Non-central, and hence non-abelian, extensions are no longer classified by $\HH^2(G,A)$ only, so a more sophisticated analysis is require in this case.
    \item McKay quivers have been widely used to study theories arising from compactification of M-theory or string theory on an orbifold. While our analysis has applied to two dimensional $\sigma$-models, it would be interesting to understand the decomposition of these quivers from the point of view of these compactifications. 
    \item In a similar vein, in the context of geometric engineering McKay quivers provide a tool to study the BPS states of certain 4d $\mathcal{N}=2$ SQFT through intepretting them as BPS quivers \cite{Closset:2019juk,DelZotto:2022fnw}. It therefore encodes properties of fractional branes, which in the context of algebraic geometry are described as coherent sheaves on a certain variety. It would be interesting to explore the link between geometry and decomposition in this context and use it to explore the properties of the related field theory.
    \item As seen in \cite{Sharpe:2021srf,Robbins:2021ibx,Robbins:2021xce,Komargodski:2020mxz, Lin:2022xod}, decomposition can be used to understand the properties of the disjoint theories. Since the McKay quiver is sensitive to the decomposition, it is possible that the quivers are also sensitive to other information from which we can deduce information about the individual theories. Understanding which properties the quiver is sensitive to could provide new algebraic techniques for understanding the associated QFTs.
    \item From a holography point of view, the quiver represents the theory living on probe D-branes at the tip of a Calabi-Yau singularity. Theories with and without discrete torsion where already studied in \cite{Douglas:1996sw,Douglas:1998xa}, but theories displaying decomposition are still to be explored. It would be interesting to understand if the decomposition can be captured by the gravitational dual of theories living on D-branes and if there are physical observables that relates the different universes.
\end{enumerate}

\section*{Acknowledgements}
We thank Michele Del Zotto, Darius Dramburg, Jonathan Heckman, Ethan Torres and Hao Zhang for many helpful conversations. We also thank Michele Del Zotto and Eric Sharpe for suggestions and remarks on a preliminary version of the paper. The work of SM is funded by the scholarship granted by ``Fondazione Angelo Della Riccia". The work of RM has recieved funding from the Simons Foundation Grant \#888984 (Simons Collaboration on Global Categorical Symmetries) and is in the context of the ERC project MEMO, which is funded by the European Research Council (ERC) under the European Union’s Horizon 2020 research and innovation programme (grant agreement No. 851931).
\appendix
\section{Some group and character theory}\label{app:gr}
\subsection{Group cohomology and computation}
In this appendix we recount the basics of group cohomology and gather some theorems used for computations in this paper. Most statements, with proofs, can be found in the classic text \cite{karpilovsky1985projective}\footnote{In this paper we are only interested in the case where the coefficient module is an abelian group. We refer the reader interested in the more general situation to \cite{brown1982cohomology}.}.

Let $G$ be a finite group, $A$ an abelian group and $f:G^{n}\rightarrow A$. We can define a set of differentials on such functions as
\begin{gather}
    (\delta_{n+1} f)(g_1,\ldots, g_{n+1}) = f(g_2,\ldots,g_{n+1})\left(\prod_{i=1}^{n} f(g_1,\ldots,g_{i-1},g_i g_{i+1},\ldots,g_{n+1})^{(-1)^i} \right) f(g_1,\ldots,g_n)^{(-1)^{n+1}}
\end{gather}
where we are writing the product structure on $A$ multiplicatively. It is tedious, but easy, to see that $(\delta_{n+1}\circ\delta_n)f=1_A$ for any $f$ and thus defines a chain complex
\begin{gather}
    C^0(G,A) \xrightarrow{\delta_1} C^1(G,A) \xrightarrow{\delta_2} C^2(G,A) \xrightarrow{\delta_3}C^3(G,A) \xrightarrow{\delta_4} \cdots
\end{gather}
where $C^n(G,A)$ denotes the set of functions $f:G^n\rightarrow A$. We define the $n^{\rm{th}}$-cohomology group to be
\begin{gather}
    \HH^n(G,A) = \rm{ker}\,\delta_{n+1}/\rm{im}\,\delta_n.
\end{gather}
Furthermore, the elements of $\rm{ker}\,\delta_{n+1}$ are called $n$-cocycles and the elements of $\rm{im}\,\delta_n$ are called $n$-coboundaries.

The description of $\HH^1(G,A)$ is particularly simple. It is clear that $\rm{im}\,\delta_1$ is trivial, while $\delta_2 f =1_A$ imposes
\begin{gather}
    f(x)f(xy)^{-1} f(y) = 1_A.
\end{gather}
In other words, $\ker \delta_2\cong \rm{Hom}(G,A)$ which we identify with $\HH^1(G,A)$. If $A=\bbCt$ then this is simply the abelianization of $G$.

In this paper we are primarily interested in the second cohomology group which is central to the theory of group extensions. In simple cases, the above definition is enough to determine $\HH^2(G,A)$. For more intricate groups, however, we can use several theorems to simplify calculations tremendously. We list some theorems which were particularly useful for our purposes. 

\begin{theorem}\label{thm:prod}
    Let $G_1$ and $G_2$ be arbitrary groups and $A$ an abelian group. Then 
    \begin{gather}
        \HH^2(G_1\times G_2,A)\cong \HH^2(G_1,A) \times \HH^2(G_2,A) \times \rm{Hom}(G_1\otimes G_2,A),
    \end{gather}
    where $G_1\otimes G_2$ is the group $\ab{G_1}\otimes_\bbZ \ab{G_2}$.
\end{theorem}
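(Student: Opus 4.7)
The plan is to construct the isomorphism explicitly by producing maps in both directions at the level of 2-cocycles. A quick alternative would be the Lyndon-Hochschild-Serre spectral sequence for the split extension $1\to G_1\to G_1\times G_2\to G_2\to 1$, whose $E_2^{p,q}=\HH^p(G_2,\HH^q(G_1,A))$ contributes exactly $\HH^2(G_2,A)$, $\HH^2(G_1,A)$, and $\rm{Hom}(\ab{G_2},\rm{Hom}(\ab{G_1},A))\cong \rm{Hom}(\ab{G_1}\otimes_\bbZ\ab{G_2},A)$ along the $p+q=2$ anti-diagonal, with the differentials and extension problem both killed by the splitting. I will however present the direct cocycle argument since it meshes with the conventions of this appendix.

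First I would construct a map $\Psi$ from the right-hand side to the left by
\begin{gather*}
    \Psi(\omega_1,\omega_2,\phi)\bigl((g_1,g_2),(h_1,h_2)\bigr) := \omega_1(g_1,h_1)\,\omega_2(g_2,h_2)\,\phi(\bar{g}_1\otimes\bar{h}_2),
\end{gather*}
where $\bar{g}_i\in\ab{G_i}$ is the image of $g_i$. Plugging into the formula for $\delta_3$, the two $\omega_i$ pieces cancel separately via their own cocycle identities, while the $\phi$ piece collapses term by term from bi-additivity of $\phi$ on $\ab{G_1}\otimes_\bbZ\ab{G_2}$. An analogous check shows that triples of coboundaries are sent to coboundaries, so $\Psi$ descends to a well-defined map on cohomology.

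Next I would build an inverse $\Phi$. Restriction to the two factors yields 2-cocycles
\begin{gather*}
    \omega_1(g,h):=\omega\bigl((g,1),(h,1)\bigr),\qquad \omega_2(g,h):=\omega\bigl((1,g),(1,h)\bigr),
\end{gather*}
on $G_1$ and $G_2$ respectively, and the mixing of the two factors is captured by the commutator pairing of the central extension of $G_1\times G_2$ by $A$ classified by $\omega$:
\begin{gather*}
    \phi(g_1,g_2):=\omega\bigl((g_1,1),(1,g_2)\bigr)\,\omega\bigl((1,g_2),(g_1,1)\bigr)^{-1}.
\end{gather*}
Since commutator pairings in central extensions are automatically bi-multiplicative and trivial on commutators, $\phi$ factors through $\ab{G_1}\otimes_\bbZ\ab{G_2}$; alternatively this can be verified directly by applying $\delta_3\omega=1_A$ to triples of the form $((g_1,1),(g_1',1),(1,g_2))$ and their counterparts on the second factor.

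The main obstacle I anticipate is the verification that $\Psi\circ\Phi$ acts as the identity on cohomology: one needs to show that an arbitrary cocycle $\omega$ agrees, up to a 2-coboundary, with the reconstruction $\omega_1\cdot\omega_2\cdot\phi$ built from its restrictions and commutator pairing. This is handled by iteratively factoring $(g_1,g_2)=(g_1,1)(1,g_2)$ and $(h_1,h_2)=(h_1,1)(1,h_2)$ inside $\omega$ via the cocycle relation, which expresses $\omega$ as a product of four terms supported on the single factors together with the mixing term $\phi$, modulo an explicit coboundary built from values of $\omega$ on mixed pairs of the form $((g,1),(1,h))$. The opposite composition $\Phi\circ\Psi$ is the identity essentially on the nose. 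Once this reconstruction is pinned down, both injectivity and surjectivity of $\Psi$ on cohomology follow, completing the proof.
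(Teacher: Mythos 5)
Your argument is correct, but note that the paper does not actually prove \cref{thm:prod}: it is quoted from the literature (Karpilovsky), so there is no in-text proof to compare against. What you have written is essentially the standard cocycle-level proof of that cited result. The forward map $\Psi$ does land in cocycles (the $\phi$-term satisfies the cocycle identity by bilinearity, as you say), the inverse $\Phi$ via restriction plus the commutator pairing $\phi(g_1,g_2)=\omega\bigl((g_1,1),(1,g_2)\bigr)\,\omega\bigl((1,g_2),(g_1,1)\bigr)^{-1}$ is well defined on cohomology because coboundaries have trivial restrictions and trivial commutator pairing, and the reconstruction step ($\Psi\circ\Phi=\mathrm{id}$) works exactly as you describe: modifying $\omega$ by the coboundary of the $1$-cochain $\beta(g_1,g_2)=\omega\bigl((g_1,1),(1,g_2)\bigr)$ splits it into the two restricted cocycles times a mixing term. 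Your spectral-sequence aside identifies the right three graded pieces, though degeneration at $E_2$ for a product with coefficients deserves more than the one clause you give it; since you do not rely on it, this is harmless.

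One small point worth pinning down when you write out the reconstruction: the mixing term you actually recover from $\omega$ via the coboundary of $\beta$ is $\phi(\bar h_1\otimes\bar g_2)^{-1}$ (pairing the $G_1$-part of the \emph{second} argument with the $G_2$-part of the \emph{first}), whereas your $\Psi$ uses $\phi(\bar g_1\otimes\bar h_2)$. These two cocycles are cohomologous --- their product is $\delta b$ for $b(g_1,g_2)=\phi(\bar g_1\otimes\bar g_2)^{-1}$ --- but the identification is not ``on the nose'' and should be stated, otherwise the claim that $\Psi\circ\Phi$ is the identity on cohomology has a silent convention mismatch in it. With that coboundary exhibited, the proof is complete.
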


\begin{theorem}
    Let $Z$ be a central subgroup of a finite group $G$. Then we have
    \begin{gather}
        \HH^2(G,\bbCt) = \HH^2(G/Z,\bbCt)/(Z\cap G^{(1)}),
    \end{gather}
    where $G^{(1)}=[G,G]$ is the derived subgroup of $G$.
\end{theorem}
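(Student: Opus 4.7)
The plan is to apply the Lyndon-Hochschild-Serre five-term exact sequence to the central extension $1\to Z\to G\to Q\to 1$ with $Q=G/Z$. Because $Z$ is central and $\bbCt$ has trivial $G$-action, the induced $Q$-action on $\HH^q(Z,\bbCt)$ is trivial, and the relevant low-degree portion of the sequence specialises to
\begin{gather*}
0\to \HH^1(Q,\bbCt)\xrightarrow{\rm{inf}}\HH^1(G,\bbCt)\xrightarrow{\rm{res}}\hat Z\xrightarrow{d_2}\HH^2(Q,\bbCt)\xrightarrow{\rm{inf}}\HH^2(G,\bbCt),
\end{gather*}
where I have used $\HH^1(Z,\bbCt)=\hat Z$, the Pontryagin dual.

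Next I would identify $\HH^1(G,\bbCt)$ with $\widehat{\ab{G}}$, so that restriction becomes restriction of characters along $Z\hookrightarrow G\twoheadrightarrow \ab{G}$. A character $\chi\in\hat Z$ extends to $G$ if and only if it is trivial on $Z\cap G^{(1)}$, so $\rm{im}(\rm{res})=\widehat{Z/(Z\cap G^{(1)})}$. Exactness at $\hat Z$ gives $\ker(d_2)=\rm{im}(\rm{res})$, whence
\begin{gather*}
\rm{im}(d_2)\cong \hat Z\,/\,\widehat{Z/(Z\cap G^{(1)})}\cong \widehat{Z\cap G^{(1)}},
\end{gather*}
which, being a finite abelian group, is non-canonically isomorphic to $Z\cap G^{(1)}$ itself. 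Exactness at $\HH^2(Q,\bbCt)$ now identifies this copy of $Z\cap G^{(1)}$ with $\ker(\rm{inf})$ and produces the injection $\HH^2(G/Z,\bbCt)/(Z\cap G^{(1)})\hookrightarrow \HH^2(G,\bbCt)$, accounting for one half of the stated equality.

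The remaining step, and the main obstacle, is surjectivity of the inflation map. For this I would dualise the Stallings five-term \emph{homology} sequence
\begin{gather*}
\HH_2(G,\bbZ)\xrightarrow{p_*}\HH_2(Q,\bbZ)\to Z\to \ab{G}\to \ab{Q}\to 0,
\end{gather*}
whose cokernel at $\HH_2(Q,\bbZ)$ is precisely $\ker(Z\to \ab{G})=Z\cap G^{(1)}$. Since $\bbCt$ is divisible and hence an injective abelian group, the universal coefficient theorem yields $\HH^2(-,\bbCt)\cong \rm{Hom}(\HH_2(-,\bbZ),\bbCt)$ and dualisation is exact, so surjectivity of $\rm{inf}$ is equivalent to injectivity of $p_*$. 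This injectivity is the genuine difficulty; I would extract it from the seven-term continuation of the LHS spectral sequence, controlling the higher differentials landing in $E_r^{0,2}$ by using the explicit structure of $\HH^2(Z,\bbCt)$ for finite abelian $Z$ together with centrality. Combining the injection with the resulting surjectivity then delivers the asserted equality, with the identification of $Z\cap G^{(1)}$ as a subgroup of $\HH^2(G/Z,\bbCt)$ understood throughout via the transgression.
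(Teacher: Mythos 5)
The paper offers no proof of this theorem --- it is quoted from Karpilovsky's book along with the other statements in the appendix --- so your attempt has to stand on its own. The first half of your argument is correct and is the standard one: the five-term inflation--restriction sequence for $1\to Z\to G\to G/Z\to 1$, together with the identification $\mathrm{im}(\mathrm{res})=\widehat{Z/(Z\cap G^{(1)})}$ (which correctly uses divisibility of $\bbCt$ to extend characters of $Z/(Z\cap G^{(1)})\leq \ab{G}$ to all of $G$), gives $\ker(\mathrm{inf})=\mathrm{im}(d_2)\cong\widehat{Z\cap G^{(1)}}\cong Z\cap G^{(1)}$ and hence an injection of $\HH^2(G/Z,\bbCt)/(Z\cap G^{(1)})$ onto the image of inflation in $\HH^2(G,\bbCt)$.

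The gap is the surjectivity step, which you rightly flag as the main obstacle but then propose to settle by proving that $p_*:\HH_2(G,\bbZ)\to\HH_2(G/Z,\bbZ)$ is injective. No amount of control over the higher differentials will deliver this, because it is false in general: Ganea's extension of the Stallings sequence for a central subgroup reads $Z\otimes\ab{G}\to\HH_2(G,\bbZ)\xrightarrow{p_*}\HH_2(G/Z,\bbZ)\to Z\cap G^{(1)}\to 0$, so $\ker p_*$ is the image of the pairing term $Z\otimes\ab{G}$, which is typically nonzero. Concretely, take $G=\bbZ_2\times\bbZ_2$ and $Z$ one of the two factors: then $\HH^2(G,\bbCt)=\bbZ_2$, while $\HH^2(G/Z,\bbCt)=\HH^2(\bbZ_2,\bbCt)=1$ and $Z\cap G^{(1)}=1$, so the asserted equality fails and $p_*:\bbZ_2\to 0$ is visibly not injective. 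What your argument actually establishes is the correct general statement $\mathrm{im}(\mathrm{inf})\cong\HH^2(G/Z,\bbCt)/(Z\cap G^{(1)})$; the theorem as printed needs an extra hypothesis forcing inflation to be surjective (equivalently, forcing the commutator pairing $\HH^2(G,\bbCt)\to\mathrm{Hom}(Z\otimes\ab{G},\bbCt)$ continuing your dualised sequence to vanish). That condition does hold in the instances where the paper applies the formula, such as $G=D_4$ or $G=\mathrm{SL}_2(\bbF_3)$ with $Z$ the centre, but it is not a consequence of centrality of $Z$ alone, so the obstacle you deferred cannot be removed without strengthening the hypotheses.
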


\begin{theorem}
     Let $G$ be a finite group and $\HH_2(G,\bbZ)$ be its second homology group with coefficients in $\bbZ$. There exists an isomorphism
    \begin{gather}
        \HH^2(G,\bbCt)\cong \HH_2(G,\bbZ).
    \end{gather}
\end{theorem}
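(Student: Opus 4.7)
The plan is to deduce the isomorphism from the universal coefficient theorem for group cohomology combined with Pontryagin duality for finite abelian groups. The universal coefficient theorem provides a (non-canonically split) short exact sequence of abelian groups which, for any $n\geq 1$ and any abelian coefficient group $A$, rearranges to
\begin{gather*}
    \HH^n(G,A)\cong \rm{Hom}(\HH_n(G,\bbZ),A)\oplus \rm{Ext}^1(\HH_{n-1}(G,\bbZ),A).
\end{gather*}
Specializing to $n=2$ and $A=\bbCt$ is the starting point of the argument.

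The first step is to eliminate the Ext summand. Every element of $\bbCt$ admits an $m$-th root for every $m\geq 1$, so $\bbCt$ is a divisible abelian group, and divisibility is equivalent to injectivity as a $\bbZ$-module. Consequently $\rm{Ext}^1(-,\bbCt)$ vanishes on all abelian groups, and the decomposition collapses to
\begin{gather*}
    \HH^2(G,\bbCt)\cong \rm{Hom}(\HH_2(G,\bbZ),\bbCt).
\end{gather*}

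The second step uses the classical theorem of Schur that for a finite group $G$, the integral homology $\HH_2(G,\bbZ)$ is itself a finite abelian group. For any finite abelian group $M$ there is a (non-canonical) isomorphism $\rm{Hom}(M,\bbCt)\cong M$: writing $M\cong \bigoplus_i \bbZ/n_i\bbZ$ as a direct sum of cyclic factors and using $\rm{Hom}(\bbZ/n\bbZ,\bbCt)\cong \mu_n\cong \bbZ/n\bbZ$, where $\mu_n$ denotes the group of $n$-th roots of unity in $\bbCt$. Applying this to $M=\HH_2(G,\bbZ)$ completes the proof.

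The only non-elementary ingredient is the universal coefficient theorem in the setting of group (co)homology, so the main conceptual obstacle is justifying that statement. One route is to pass to the classifying space $BG$ and invoke the topological universal coefficient theorem; alternatively, one rederives it directly from a projective resolution of $\bbZ$ over $\bbZ G$, computing $\HH^*(G,A)$ and $\HH_*(G,\bbZ)$ from the same resolution. After that, the injectivity of $\bbCt$ and the self-duality of finite abelian groups are routine.
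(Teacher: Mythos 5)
Your argument is correct: the universal coefficient sequence for group cohomology with trivial coefficients, the vanishing of $\rm{Ext}^1(-,\bbCt)$ because $\bbCt$ is divisible and hence injective over $\bbZ$, the finiteness of $\HH_2(G,\bbZ)$ for finite $G$, and the non-canonical self-duality $\rm{Hom}(M,\bbCt)\cong M$ for finite abelian $M$ assemble exactly as you describe, with the only point requiring care being the one you flag, namely that the complex computing $\HH_*(G,\bbZ)$ can be taken to consist of free abelian groups so that the topological universal coefficient theorem applies. The paper does not prove this statement at all --- it quotes it as a standard fact and defers to the cited literature --- and your route is the standard one found there, so there is nothing to contrast.
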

\noindent {\it Note.} We can typically compute $\HH_2(G,\bbZ)$ in {\tt Sage} or {\tt GAP}. 

These theorems together with {\tt Sage} and {\tt GAP} are sufficient to find the second cohomology groups presented in \cref{tbl:grp} which are used throughout the paper.
\begin{table}[]
\centering
\begin{tabular}{cccc}
\hline
\hline\\[-1.1em]
$G$                    & $\HH^2(G,\bbCt)$ & $G^*$                    & {\tt GAP ID}          \\ \hline
$D_5$ & $1$ & $D_5$ & {\tt [10,1]}\\ 
$\rm{Dic}_5$ & $1$ & $\rm{Dic}_5$ & {\tt [20,1]} \\ 
$\bbZ_2\times\bbZ_2$ & $\bbZ_2$ & $D_4$ & {\tt [8,3]} \\
$\Alt(4)$              & $\bbZ_2$         & $\rm{SL}_2(\bbF_3)\cong \cT$ & {\tt [24,3]}         \\
$D_{10}$ & $\bbZ_2$& $\bbZ_5 \rtimes D_4$& {\tt [40,8]}\\
$\Sym(4)$              & $\bbZ_2$         & $\rm{GL}_2(\bbF_3)$       & {\tt [48,29]}        \\
$\bbZ_3\times \Alt(4)$ & $\bbZ_6$            & $Q_8\rtimes \rm{He}_3$   & {\tt [216,42]}      \\ \hline \hline
\end{tabular}\caption{Here we list the groups together with the Schur cover that we used within the paper. The final column gives the {\tt GAP ID} of $G^*$, so the interested reader can access them via the {\tt SmallGroups} library.}\label{tbl:grp}
\end{table}
\subsection{Proof of isoclinism invariance of McKay quivers}\label{sec:proof}

\begin{proposition}
    Let $G_1$ and $G_2$ be two isoclinic central extensions of some group $F$ by $A_1$ and $A_2$ respectively. Further assume that $G_1$ and $G_2$ have the same order. Then
    \begin{gather}
        \cQ(G_1, \cR_1) = \cQ(G_2, \cR_2),
    \end{gather}
    where $\cR_1$ and $\cR_2$ are lifts of a linear representation $\cR:F\rightarrow \GL{V}$.
\end{proposition}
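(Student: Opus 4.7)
The plan is to parametrize every irreducible representation of $G_i$ by a pair consisting of a central character on $Z(G_i)$ and an irreducible projective representation of $Q_i:=G_i/Z(G_i)$ with an appropriate cocycle, and then transport this data across the isoclinism to obtain a bijection of quivers. First I would apply Schur's lemma to $\rho\in\hat{G}_i$: the central character $\omega_\rho:Z(G_i)\to\bbCt$ is the scalar by which $Z(G_i)$ acts, and choosing a set-theoretic section $s_i:Q_i\to G_i$ makes $\bar\rho:=\rho\circ s_i$ into an irreducible $\alpha_{\omega_\rho}$-projective representation of $Q_i$, where $\alpha_{\omega_\rho}\in\HH^2(Q_i,\bbCt)$ is the pushforward along $\omega_\rho$ of the extension cocycle of $1\to Z(G_i)\to G_i\to Q_i\to 1$. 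Each compatible pair $(\omega,\bar\rho)$ (meaning $\bar\rho$ is $\alpha$-projective with $\alpha=\alpha_\omega$) lifts uniquely to an irreducible of $G_i$, so $\hat{G}_i$ is in bijection with such compatible pairs.

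Next I would build a bijection $\Phi:\hat{G}_1\to\hat{G}_2$ using the isoclinism data $(\eta,\nu)$. The isomorphism $\eta:Q_1\to Q_2$ transports projective representations and cocycle classes in $\HH^2(Q_i,\bbCt)$. The antisymmetric part of any cocycle $\alpha\in\HH^2(Q_i,\bbCt)$ is controlled by the commutator pairing $Q_i\times Q_i\to G_i^{(1)}$, and the compatibility of $\nu$ with commutators guarantees that the assignment $\omega\mapsto\alpha_\omega$ from $\widehat{Z(G_i)}$ to $\HH^2(Q_i,\bbCt)$ is intrinsic to the isoclinism class. The hypothesis $|G_1|=|G_2|$ together with $|Q_1|=|Q_2|$ forces $|Z(G_1)|=|Z(G_2)|$, and a direct calculation shows $|Z(G_1)\cap G_1^{(1)}|=|Z(G_2)\cap G_2^{(1)}|$, so the two maps have matching kernels and images. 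Matching central characters whose induced cocycles agree and identifying projective representations via $\eta$ then yields the bijection $\Phi$.

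I would then verify that $\cR_1$ and $\cR_2$ correspond under $\Phi$ and that tensor products are preserved. Because each $\cR_i$ factors through $F=G_i/A_i$, the character $\omega_{\cR_i}$ is trivial on $A_i$, and the induced projective representation on $Q_i$ is the pushdown of $\cR$ through $F\to Q_i$; the isoclinism identifies these pushdowns, giving $\Phi(\cR_1)=\cR_2$. Tensor products of irreducibles multiply central characters and tensor projective representations with additive cocycles, and all of this data is preserved by $\Phi$, so the adjacency matrices of $\cQ(G_1,\cR_1)$ and $\cQ(G_2,\cR_2)$ coincide vertex by vertex.

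The main obstacle is the second step: showing that the image and fibers of $\widehat{Z(G_i)}\to\HH^2(Q_i,\bbCt)$ are genuinely intrinsic to the isoclinism class. This requires describing the relevant differential, most naturally via the low-degree exact sequence of the Lyndon--Hochschild--Serre spectral sequence for $1\to Z(G_i)\to G_i\to Q_i\to 1$, and identifying it with the commutator pairing so that $(\eta,\nu)$ really induces the same map for $G_1$ and $G_2$. A secondary subtlety is that the quotients $F\to Q_i$ may have different kernels inside $Z(F)$, and one must ensure the isoclinism restricts compatibly to these distinguished quotients. Once these points are in place, the rest of the argument is essentially bookkeeping.
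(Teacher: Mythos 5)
Your proposal is correct in substance and rests on the same two pillars as the paper's proof --- Schur's lemma giving each irreducible of $G_i$ a central character whose pushforward of the extension cocycle fixes the discrete torsion class, and the isoclinism data controlling which classes occur --- but it is organised quite differently. You build an explicit bijection $\Phi:\hat{G}_1\to\hat{G}_2$ through the quotients $Q_i=G_i/Z(G_i)$ and then check that tensoring with $\cR_i$ is transported. The paper never constructs a vertex bijection: it writes the adjacency matrix as the character sum $\tfrac{1}{|G|}\sum_g\chi_{\cR_G}(g)\chi_i(g)\overline{\chi_j(g)}$, splits the sum over $A$-cosets, and factors it into an orthogonality pairing of the central characters on $A$ times a sum over a transversal of $F$ that involves only projective characters of $F$. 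That factorisation exhibits the quiver directly as $\bigsqcup_{\lambda\in\widehat{A}}\cQ_{\alpha_\lambda}(F,\cR)$ (the form used throughout the rest of the paper) and reduces the proposition to matching the multiset $\{\alpha_\lambda\}\subset\HH^2(F,\bbCt)$ between $G_1$ and $G_2$; your version buys an explicit correspondence of irreducibles but must additionally verify $\Phi(\cR_1)=\cR_2$, a step the paper's route avoids entirely.

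Two remarks on the points you flag. The ``main obstacle'' closes without the Lyndon--Hochschild--Serre machinery: a class $\alpha\in\HH^2(Q,\bbCt)$ is determined by the function $(q,q')\mapsto\alpha(q,q')\alpha(q',q)^{-1}$ on commuting pairs, and for $\alpha_\omega$ this value is $\omega([\tilde q,\tilde q'])$, i.e.\ $\omega$ restricted to $Z(G_i)\cap G_i^{(1)}$ composed with the commutator map; since $\nu$ intertwines the commutator maps and $|Z(G_1)|=|Z(G_2)|$, the images and fibre sizes of $\omega\mapsto\alpha_\omega$ agree, which is all you need. Your ``secondary subtlety'' is genuine and applies equally to the paper's proof: an isoclinism of abstract groups need not respect the two projections $F\to Q_i$, so the hypothesis must be read (as in Hall's theorem quoted in the paper) as isoclinism of \emph{extensions} over $F$, with $\eta$ compatible with the distinguished quotient maps; under that reading both arguments go through.
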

\begin{proof}
    Let $\cP$ be a (possibly projective) representation of $F$. Then lifting $\cP$ to a central extension $G$ we see that $\cP(f)$ lifts to $\cP'(\mu(f)\cdot A)$ where $\mu:F\rightarrow G$ is a section. Since $A$ is central, we must have that $\cP'(A)$ is a non-zero multiple of the identity. Denote this as $\cP'(a_i)=\lambda_\cP(a_i)1_{\GL{V}}$. Note that if $\cP$ is actually linear, then $\lambda_\cP(a_i)\equiv 1$. The key observation here is that the $\lambda_\cP$ are simply irreducible representations of $A$. Furthermore, from $\lambda_\cP$ we can recover the 2-cocycle associated to $\cP$  by defining the function $\Phi:G\times G\rightarrow A$ as \cite{1992projective}
    \begin{gather}
        \mu(x)\cdot \mu(y) = \Phi(x,y)\mu(xy).
    \end{gather}
    Then $\alpha(x,y)=\lambda_\cP(\Phi(x,y))$ is the associated 2-cocycle to $\cP$. As such, two representations $\cP_1$ and $\cP_2$ with equal center representations $\lambda_{\cP_1}\equiv\lambda_{\cP_2}$ have the same associated 2-cocycle.
    
    The adjacency matrix for the McKay quiver of $G$ relative to a linear lift $\cR_G$ is given by
    \begin{gather}
        M_{ij} = \frac{1}{|G|}\sum_{g\in G} \chi_{\cR_G}(g) \chi_i(g)\overline{\chi_j(g)}.
    \end{gather}
    We can partition $G$ into $A$-cosets and rewrite the sum as
    \begin{gather}
        M_{ij} = \frac{1}{|G|}\sum_{\ell\in \mu(F)}\sum_{a\in A} \chi_{\cR_G}(\ell a) \chi_i(\ell a)\overline{\chi_j(\ell a)}.
    \end{gather}
    As mentioned above, since $A$ is central, we must have that $\cP'(a)$ is a scalar matrix and we can rewrite the characters as
    \begin{gather}\label{eq:split}
        M_{ij} =\left( \frac{1}{|A|}\sum_{a\in A} \lambda_i(a)\overline{\lambda_j(a)}\right)
        \left(\frac{|A|}{|G|}\sum_{\ell\in \mu(F)}\chi_{\cR_G}(\ell) \chi_i(\ell)\overline{\chi_j(\ell)}\right).
    \end{gather}
    The first term simply enforces the orthogonality of characters. As such, this is only non-zero when $\lambda_i$ and $\lambda_j$ are equivalent representations and therefore determine the same 2-cocycle. The second term is more subtle, but does not depend on $G$, only on the values of projective representations of $F$. Since isoclinic groups of the same order have representations which are in one-to-one correspondence with each other and are lifts of projective representations of $G_1/Z(G_1)\cong G_2/Z(G_2)$ (with multiplicity), we see that \cref{eq:split} gives the same result up to relabelling of the indices. 
\end{proof}

Note that \cref{eq:split} also reproduces the result that the component connected to the trivial representation is simply the quiver for $F$ relative to $\cR$. It also shows that two vertices can be connected by an edge if and only if their center representations are equivalent.

\subsection{Characters and normal subgroups}
One of the benefits of the method advocated in this paper is that very little information about the explicit representations is needed. In fact, most of the pertinent information is encoded in the character table of these groups, which can be readily calculated through a modern computational algebra system such as {\tt Sage} or {\tt GAP}. We will briefly review how to obtain such information given the character table of a finite group.

Given a representation $\rho:G\rightarrow\GL{V}$ it is obvious that $\ker \rho$ is a normal subgroup of $G$. This kernel can be described as
\begin{gather}
    \ker\rho = \{ g \in G : \chi_\rho(g)=\chi_\rho(1_G)\}.
\end{gather}
Since characters are class functions, this guarantees that only full conjugacy classes are included in the kernel, as expected. It is also clear from this description that intersections of kernels corresponds to the addition of characters. That is, we have
\begin{gather}
    \ker\rho_1 \cap \ker\rho_2 = \{g\in G: \chi_{\rho_1\oplus\rho_2}(g)=\chi_{\rho_1\oplus\rho_2}(1_G)\}.
\end{gather}
It is known that all normal subgroups of $G$ can be obtained by taking the intersections of kernels of irreducible representations \cite{isaacs1994}. 

As we know all normal subgroups can be found from the character table, it is natural to ask if how can one find the center and derived subgroup of $G$ easily from the table alone. For the latter we have the simple expression
\begin{gather}
    G^{(1)} = \bigcap_{\rm{dim}\,\rho=1} \ker\rho.
\end{gather}
In other words, $G^{(1)}$ can be found by taking the intersection of the kernels of one-dimensional representations. Finding the center is a bit more subtle. To do so, we define the sets
\begin{gather}
    \zeta(\rho) =\{g\in G: |\chi_\rho(g)|=\chi_\rho(1_G)\} =\{g\in G : \rho(g) \in \bbCt 1_{\GL{V}}\}.
\end{gather}
The center of $G$ is then given as
\begin{gather}
    Z(G) = \bigcap_{\rho}\, \zeta(\rho), 
\end{gather}
where the intersection is over irreps of $G$. In particular, a one dimensional representation has $\zeta(\rho)=G$ so it suffices to look at representations of degree $2$ and higher only.

To illustrate the above statements consider the group $\rm{Dic}_3$, the character table of which is given by:
\begin{center}
\begin{tabular}{c|cccccc}
\hline \hline
\, & $C_1^{(1)}$ & $C_2^{(1)}$ & $C_3^{(2)}$ & $C_4^{(2)}$ & $C_5^{(3)}$ & $C_6^{(3)}$ \\ \hline 
$\rho_1$ & $1$ & $1$ & 1 & 1 & 1 & 1 \\
$\rho_2$ & 1 & $-1$ & 1 & $-1$ & $\I$ & $-\I$ \\
$\rho_3$ & 1 & 1 & 1 & 1 & $-1$ & $-1$ \\
$\rho_4$ & 1 & $-1$ & 1 & $-1$ & $-\I$ & $\I$ \\
$\rho_5$ & $2$ & $2$ & $-1$ & $-1$ & 0 & 0 \\
$\rho_6$ & $2$ & $-2$ & $-1$ & 1 & 0 & 0 \\ \hline \hline
\end{tabular}
\end{center}
A simple example of a normal subgroup is $N_1=C_1^{(1)}\cup C_3^{(2)}$. This is a non-central subgroup of order three which is exactly trivialised under $\rho_2$. Instead looking at the intersection of the kernels of $\rho_1$ through $\rho_4$ we get that the derived subgroup $\rm{Dic}_3^{(1)}$ is also given by $N_1$. Finally, the center of $\rm{Dic}_3$ is obtained by looking at $\zeta(\rho_5)$ and $\zeta(\rho_6)$ from which we get that $Z(\rm{Dic}_3)=C_1^{(1)}\cup C_3^{(1)}\cong \bbZ_2$, in agreement with the fact that central elements belong to their own conjugacy class.

\phantomsection
\addcontentsline{toc}{section}{References}
\bibliography{reps.bib}{}
\end{document}